\newcommand\vldbyear{2024}
\newcommand\vldbworkshop{3rd International Workshop on Large-Scale Graph Data Analytics (LSGDA 2024)}
\newcommand\vldbauthors{\authors}
\newcommand\vldbtitle{\shorttitle} 
\newcommand\vldbavailabilityurl{https://tinyurl.com/mryfsuxp}
\newcommand\vldbpagestyle{plain} 
\newtheorem{example}{\textit{\textbf{Example}}}
\newtheorem{theorem}{{\textbf{Theorem}}}
\newtheorem{lemma}{{\textbf{Lemma}}}
\newtheorem{definition}{{\textbf{Definition}}}
\newcommand{\myparagraph}[1]{\vspace{1mm} \noindent \textbf{#1}.}
\begin{document}
\title{Parallel Higher-order Truss Decomposition}

%%
%% The "author" command and its associated commands are used to define the authors and their affiliations.
\author{Chen Chen}
\affiliation{%
  \institution{University of Wollongong}
  \city{Wollongong}
  \country{Australia}
}
\email{chenc@uow.edu.au}

\author{Jingya Qian}
\affiliation{%
  \institution{Zhejiang Gongshang University}
  \city{Hangzhou}
  \country{China}
}
\email{jingyaq.zjgsu@gmail.com}

\author{Hui Luo}
\affiliation{%
  \institution{University of Wollongong}
  \city{Wollongong}
  \country{Australia}
}
\email{huil@uow.edu.au}

\author{Yongye Li}
\affiliation{%
  \institution{Zhejiang Gongshang University}
  \city{Hangzhou}
  \country{China}
}
\email{yongyeli.zj@gmail.com}

\author{Xiaoyang Wang}
\affiliation{%
  \institution{University of New South Wales}
  \city{Sydney}
  \country{Australia}
}
\email{xiaoyang.wang1@unsw.edu.au}

%%
%% The abstract is a short summary of the work to be presented in the
%% article.
\begin{abstract}
The $k$-truss model is one of the most important models in cohesive subgraph analysis. The $k$-truss decomposition problem is to compute the trussness of each edge in a given graph, and has been extensively studied. However, the conventional $k$-truss model is difficult to characterize the fine-grained hierarchical structures in networks due to the neglect of high order information. To overcome the limitation, the higher-order truss model is proposed  in the literature. However, the previous solutions only consider non-parallel scenarios. To fill the gap, in this paper, we conduct the first research to study the problem of parallel higher-order truss decomposition. Specifically, a parallel framework is first proposed. Moreover, several optimizations are further developed to accelerate the processing. Finally, experiments over 6 real-world networks are conducted to verify the performance of proposed methods. 
\end{abstract}

\maketitle

%%% do not modify the following VLDB block %%
%%% VLDB block start %%%
\pagestyle{\vldbpagestyle}
\begingroup\small\noindent\raggedright\textbf{VLDB Workshop Reference Format:}\\
\vldbauthors. \vldbtitle. VLDB \vldbyear\ Workshop: \vldbworkshop.\\ %\vldbvolume(\vldbissue): \vldbpages, \vldbyear.\\
%\href{https://doi.org/\vldbdoi}{doi:\vldbdoi}
\endgroup
\begingroup
\renewcommand\thefootnote{}\footnote{\noindent
This work is licensed under the Creative Commons BY-NC-ND 4.0 International License. Visit \url{https://creativecommons.org/licenses/by-nc-nd/4.0/} to view a copy of this license. For any use beyond those covered by this license, obtain permission by emailing \href{mailto:info@vldb.org}{info@vldb.org}. Copyright is held by the owner/author(s). Publication rights licensed to the VLDB Endowment. \\
\raggedright Proceedings of the VLDB Endowment. %, Vol. \vldbvolume, No. \vldbissue\ %
ISSN 2150-8097. \\
%\href{https://doi.org/\vldbdoi}{doi:\vldbdoi} \\
}\addtocounter{footnote}{-1}\endgroup
%%% VLDB block end %%%

%%% do not modify the following VLDB block %%
%%% VLDB block start %%%
\ifdefempty{\vldbavailabilityurl}{}{
\vspace{.3cm}
\begingroup\small\noindent\raggedright\textbf{VLDB Workshop Artifact Availability:}\\
The source code, data, and/or other artifacts have been made available at \url{\vldbavailabilityurl}.
\endgroup
}
%%% VLDB block end %%%

\section{Introduction}

Graphs are widely used to model the complex relationships among entities, such as social networks, protein-protein interaction networks and finance networks~\cite{wang2016bring,cheng2021efficient,wu2024efficient}. As one of the most fundamental tasks in graph analysis, cohesive subgraph detection has been extensively studied, and different models are proposed in the literature, such as $k$-core, $k$-truss, $k$-plex and clique~\cite{wu2020maximum,chen2021edge,chen2021maximum}. 
Among them, $k$-truss has received a lot of attention for its excellent balance between density and efficiency. $k$-truss leverages the properties of triangle to model the strength of connections. 
Specifically, given a graph $G$ and positive integer $k$, the $k$-truss is the maximal subgraph $S$ of $G$, such that the support of each edge $e (u,v)$ in $S$ is no less than $k-2$.
The support of $e(u,v)$ is the number of triangles that contain the edge in $S$, i.e., the number of common neighborhoods of $u$ and $v$ in the subgraph.
The trussness of an edge is the largest $k$ that $k$-truss contains the edge, i.e., the edge is in $k$-truss but not $(k+1)$-truss. Then, the $k$-truss can be computed by returning all the edges with trussness no less than $k$.

However, the $k$-truss model also has certain limitations. 
$k$-truss more focuses on describing pairwise interactions in the subgraph, 
while real-world systems may have many higher-order interactions involving groups of three or more units, which limits its application. 
Therefore, the higher-order truss model is proposed and studied in the literature to better capture the hierarchy of substructures~\cite{chen2021higher}. In particular, given a graph $G$, an integer $k$, and a hop threshold $h$, the higher-order $(k,h)$-truss of $G$ is the maximal subgraph $S$, where the $h$-support of each edge in $S$ is no less than $k-2$. $h$-support is the number of $h$-hop common neighbors for a given edge. 
The following motivating example illustrates that higher-order truss enables us to find more find-grained substructures in the underlying graph.

\begin{figure*}
    \centering
	\subfigure[1-hop neighborhood]{
		\includegraphics[width=0.48\linewidth]{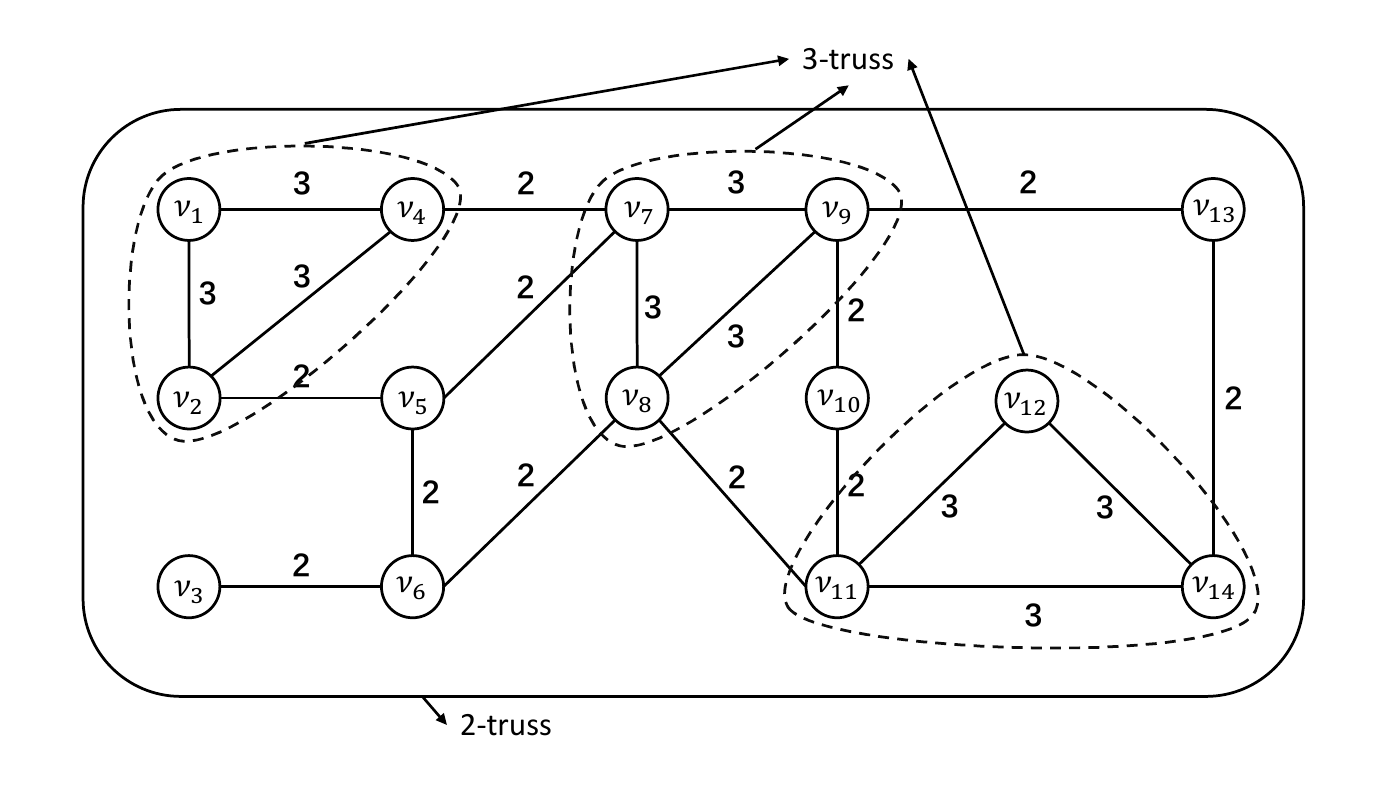}
        \label{fig1:a}
	}	
	\subfigure[$2$-hop neighborhood]{
		\includegraphics[width=0.48\linewidth]{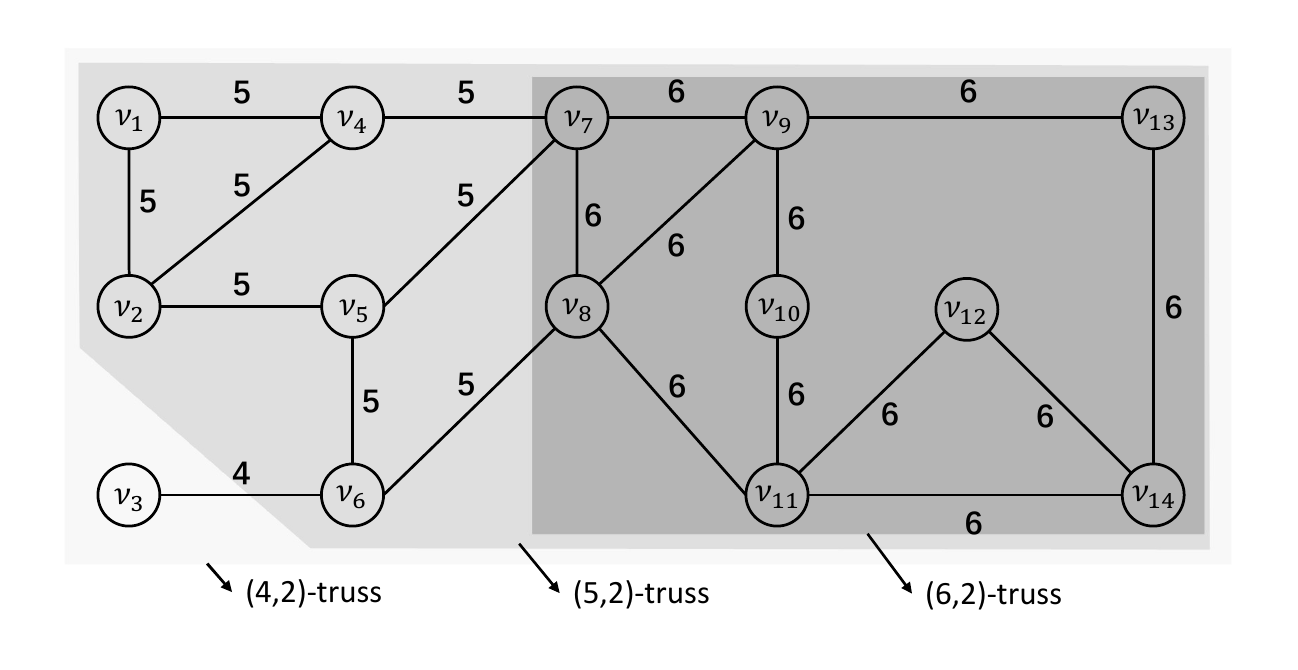}
		\label{fig1:b}
	}
    \caption{Motivation example of higher-order truss decomposition} \label{fig1}
\end{figure*}

\begin{example}
    Figure~\ref{fig1} shows a toy social network with 14 nodes. 
    The numbers on the edges in Figures~\ref{fig1}(a) and \ref{fig1}(b) denote the corresponding trussness and higher-order trussness, respectively. 
    If we adopt the $k$-truss model, the graph $G$ will be decomposed into $2$-truss and $3$-truss as shown in Figure \ref{fig1:a}
    which lacks a sense of hierarchy. 
    On the other hand, if we consider $h=2$, the $(k,h)$-truss model will find $(4,2)$-truss, $(5,2)$-truss and $(6,2)$-truss with a hierarchical structure. Specifically, the trusses are shown in Figure \ref{fig1:b} with different shades. 
\end{example}

For the higher-order scenario, given a graph $G$, the $h$-trussness of an edge $e$ is the largest $k$ such that the $(k,h)$-truss contains $e$, i.e., $e$ belongs to $(k,h)$-truss but not $(k+1,h)$-truss. The higher-order truss decomposition problem is to compute the $h$-trussness of all the edges for a given $h$. 
Similar as $k$-truss, based on $h$-trussness, we can easily retrieve the higher-order truss for any given $k$.
Naively, we can extend the peeling framework for $k$-truss decomposition to solve the higher-order case~\cite{chen2021higher}. However, it will generate much computation due to the larger search space in high order neighbors. Moreover, the computation cost would increase rapidly for higher $h$. To solve the problem, we conduct the first research to investigate the alternative of parallel solution. 

\myparagraph{Contributions} The contributions of this paper are summarized as follows.

\begin{itemize} 
    \item In this paper, we propose a parallel framework for higher-order truss decomposition based on the H-index computation paradigm. 
    
    \item Several optimization techniques are further proposed to accelerate the computation by considering the possibility of asynchronous update and redundant computation pruning. 
    
    \item We conducted extensive experiments on 6 real-world graphs to demonstrate the performance of proposed techniques. 
    
\end{itemize}

\myparagraph{Roadmap} The rest of the paper is organized as follows. 
We formally introduces the problem studied in Section 2, and present a baseline non-parallel approach in Section 3. 
In Section 4, we present the parallel framework  and optimization techniques proposed. 
In Section 5, we demonstrate the performance of the proposed techniques.
We introduce the related works in Section 6 and conclude the paper in Section 7.

\section{Preliminaries}

Given a graph $G=(V, E)$, we use $V$ and $E$ to represent the sets of vertices and edges in $G$, respectively. 
The set of neighbors of a vertex $v$ is denoted by $N_G(v)=\{u\in V|(u,v)\in E\}$.
The degree of $v$ is denoted by $deg_G(v)$, which is the number of its direct neighbors, i.e., $deg_G(v) = |N_G(v)|$.
% The shortest path distance between $u$ and $v$ is denoted by $dist_G(u,v)$.
Given a positive integer $h$, we use $N_G(v,h)$ to denote the set of $h$-hop neighbors of $v$, i.e., within distance $h$. 
% the set of $h$-hop neighbors of $v$ is denoted by $N_G(v,h)=\{u \in V | dist_G(u,v) \leq h\}$.
% And we call $u \in N_G(v,h)$ the $h$-hop reachable vertex of $v$.
Similarly, the $h$-degree of $v$, denoted by $deg_G(v,h)=|N_G(v,h)|$, refers to the number of its $h$-hop neighbors.
$S=(V_S,E_S)$ is a subgraph of $G$ with $V_S\subseteq V$ and $E_S=\{(u,v)|(u,v)\in E,u\in V_S,v\in V_S\}$.
Given an edge $e = (u, v)$, the support $sup_G(e)$ of $e$ is defined as the number of common neighbors of $u$ and $v$, i.e., the number of triangles containing $e$.
$sup_G(e)=|\bigtriangleup_G (e)|$, where $\bigtriangleup_G (e) = N_G(u) \cap N_G(v)$.

% \begin{definition} [common neighbor]
%     Given a graph $G$, for an edge $e=(u,v)$ in $G$, if $w$ is a neighbor of both $u$ and $v$, then $w$ is a common neighbor of edge $e$. The set of common neighbours of $e$ is represented as $\bigtriangleup_G (e)$.
% \end{definition}

% \begin{definition} [support]
%     Given a graph $G$, for an edge $e$, the support $sup_G(e)$ of $e$ is defined as the number of common neighbors of edge $e$, that is, $sup_G(e)=|\bigtriangleup_G (e)|$.
% \end{definition}

\begin{definition} [$k$-truss]
\label{def:k-truss}
    Given a graph $G$ and a positive integer $k$, a subgraph $S\subseteq G$ is the $k$-truss of $G$ if 
    $(i)$ $\forall e\in E_S$, $sup_S(e)\geq k-2$, and $(ii)$ $S$ is maximal, i.e., any supergraph of $S$ d.
\end{definition}

The trussness of an edge $e$, denoted as $t(e)$,
is the largest $k$ such that there is a $k$-truss that contains $e$, i.e., $e$ is in $k$-truss but not in $(k+1)$-truss. 
Given a graph $G$, the \textit{truss decomposition} problem is to compute the trussness of all the edges in $G$. 
% We then extend the problem to the higher-order scenario and introduce the notion of $(k, h)$-truss. 

\begin{definition} [common $h$-neighbor]
\label{def: common-h-neighbor}
    Given a graph $G$ and a distance threshold $h$, for an edge $e=(u,v)$ in $G$, if $w$ is an $h$-hops neighbor of both $u$ and $v$, then $w$ is a common $h$-neighbor of $e$. The set of common $h$-neighbors is denoted by $\bigtriangleup_G(e,h)$.
\end{definition}

\begin{definition} [$h$-support]
    Given a graph $G$ and a distance threshold $h$, for an edge $e$, the $h$-support of $e$, denoted by $sup_G(e,h)$, is the number of common $h$-neighbors of $e$, that is, $sup_G(e,h)=|\bigtriangleup_G (e,h)|$.
\end{definition}

\begin{definition} [$(k,h)$-truss]
    Given a graph $G$, a distance threshold $h$ and a positive integer $k$, a subgraph $S\subseteq G$ is the $(k,h)$-truss of $G$ if  
    $(i)$ $\forall e\in E_S$, $sup_S(e,h)\geq k-2$, and $(ii)$ $S$ is maximal, i.e., any supergraph of $S$ is not a $(k,h)$-truss.
\end{definition}

As we can see, when $h=1$, $(k,h)$-truss is the same as 
According to the above definitions, we define the $h$-trussness of edge $e$ to be the largest integer $k$ 
such that there is a $(k,h)$-truss containing $e$, denoted as $t(e,h)$. 
    
\myparagraph{Problem statement} Given a graph $G$ and a distance threshold $h$, in this paper, we aim to design efficient parallel solution for  
higher-order truss decomposition problem, i.e., compute the $h$-trussness of all the edges in $G$.

\section{Baseline Approach}
\label{sec:base}

In non-higher-order scenario, i.e., $h=1$, the truss model is containment. That is, $(k+1)$-truss is contained in $k$-truss.
Similarly, $(k,h)$-truss in the higher-order neighborhood also satisfies this property as shown in Lemma~\ref{le:contain}. The correctness can be easily verified based on the definition, and we omit the detailed proof here. 

\begin{lemma} \label{le:contain}
    Given a graph $G$, a distance threshold $h$ and a positive integer $k$, a $(k+1,h)$-truss of $G$ is a subgraph of $(k,h)$-truss.
\end{lemma}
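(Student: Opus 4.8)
The plan is to combine the maximality clause in the definition of $(k,h)$-truss with a simple monotonicity property of the $h$-support. Write $S_{k+1}$ for the $(k+1,h)$-truss and $S_k$ for the $(k,h)$-truss of $G$; the goal is to show $S_{k+1}\subseteq S_k$. The first step I would take is to record that $h$-support is monotone under subgraph inclusion: if $S\subseteq S'$ and $e\in E_S$, then $sup_S(e,h)\leq sup_{S'}(e,h)$. The reason is structural rather than computational — any path of length at most $h$ witnessing a common $h$-neighbor of $e$ inside $S$ survives in the larger subgraph $S'$, so $\bigtriangleup_S(e,h)\subseteq \bigtriangleup_{S'}(e,h)$, and taking cardinalities yields the inequality. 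This is the only fact about $h$-hop neighborhoods that the argument needs.

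Next I would verify that $S_{k+1}$ already satisfies condition $(i)$ of a $(k,h)$-truss. By definition every edge $e\in E_{S_{k+1}}$ satisfies $sup_{S_{k+1}}(e,h)\geq (k+1)-2 = k-1\geq k-2$, so $S_{k+1}$ is a feasible subgraph for the threshold $k-2$. The containment then follows from maximality, provided the maximal feasible subgraph is unique. To secure uniqueness I would show that the union of any two subgraphs each satisfying condition $(i)$ for threshold $k-2$ again satisfies it: an edge of the union belongs to one of the two constituents, where it already meets the threshold, and by the monotonicity step its $h$-support can only increase in the larger union. Hence the union of all feasible subgraphs is itself feasible, so it is the unique maximal one, namely $S_k$. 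Since $S_{k+1}$ is feasible, it is contained in $S_k$, which is exactly the claim.

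The main obstacle I anticipate is precisely the uniqueness of the maximal subgraph, since that is what makes ``maximal'' well defined in the definition and allows the containment to be deduced from feasibility alone. Everything hinges on the monotonicity of $h$-support under vertex and edge addition; once that is established, the union closure and the final inclusion are routine, which is consistent with the paper's remark that the result follows directly from the definitions.
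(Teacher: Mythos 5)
The paper offers no actual proof of this lemma---it simply remarks that correctness ``can be easily verified based on the definition''---and your argument is precisely that omitted verification, done correctly: monotonicity of $h$-support under subgraph inclusion (since $\bigtriangleup_S(e,h)\subseteq \bigtriangleup_{S'}(e,h)$ when $S\subseteq S'$), feasibility of the $(k+1,h)$-truss at threshold $k-2$, and uniqueness of the maximal feasible subgraph via closure under unions. The one caveat is that your union-closure step reads ``subgraph'' in the standard truss sense, where the union of two feasible subgraphs is $(V_1\cup V_2,\, E_1\cup E_2)$ so that every edge of the union lies in a constituent; under the paper's literal vertex-induced subgraph convention the induced union could contain cross edges belonging to neither constituent (and indeed uniqueness of the truss would fail), so your implicit choice of the edge-subgraph reading is not only acceptable but necessary for the lemma to be well-posed.
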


According to Lemma~\ref{le:contain}, the $(k+1, h)$-truss can be obtained by peeling the $(k, h)$-truss of $G$. 
It means the edges peeled from $(k, h)$-truss are with $h$-trussness of $k$. Based on the peeling framework, we present the baseline non-parallel method by interactively removing the edges with the smallest $h$-support. To avoid the numerous sorting operations, we adopt bin sort as in core decomposition~\cite{batagelj2003m,sun2020stable}.

%------------------------------------Algorithm 5------------------------------------
\begin{algorithm}[t]
        \caption{Baseline Solution} 
        \label{alg:base}
        \KwIn{graph $G(V,E)$, distance threshold $h$} 
        \KwOut{the $h$-trussness for each $e\in E$} 
        Initialization: $bin[ \cdot ]\leftarrow \{0\}$\;
        \ForEach{$e\in E$}
        {
            compute $h$-support $sup_G(e,h)$\;
            $bin[sup_G(e,h)+2]\leftarrow bin[sup_G(e,h)+2]\cup \{e\}$\;
        }
        ubtruss $\leftarrow max\{sup_G(e,h)|e\in E\}+2$\;
        \For{$k=2$ to ubtruss}
        {
            \While{$bin[k]\neq \emptyset$}
            {
                $e\leftarrow$ pick an edge from $bin[k]$\; 
                $t(e,h)\leftarrow k$\;
                $G\leftarrow G\setminus e$\;
                \ForEach{$e'\in \mathcal{E} _G(e,h)$ with $sup_G(e',h)>k-2$}
                {
                    compute and update $sup_G(e',h)$\;
                    move $e'$ to $bin[max(sup_G(e',h)+2,k)]$\;
                }
            }
        }
        $T\leftarrow \{t(e,h):e\in E\}$\;
        \textbf{return} $T$\;
\end{algorithm}

Algorithm~\ref{alg:base} presents the details of the baseline solution.
Specifically, we first initialize a set of bins in line 1, where $bin[i]$ is used to store all the edges with $h$-support equal to $i$.
Then, we compute $h$-support for all edges of graph $G$, and add these edges to the corresponding bins (lines 2-4).
We use ubtruss to denote the current maximum $h$-support+2.
Next, the bins are processed in increasing order of $k$ from $2$ to ubtruss. When the iteration reaches $k$, select an edge $e$ in $bin[k]$ to remove, and set its $h$-trussness to $k$, 
i.e., $t(e,h)=k$. When we remove $e$ from the graph, the $h$-support of the edge $e'\in \mathcal{E}_G(e,h)$ composed of its endpoint and its common $h$-neighbor vertices may decrease, 
so we update its $h$-support and move it to the corresponding bin (Lines 6-13). The iteration terminates when the $h$-trussness of all the edges are computed. 

\begin{figure}[t]
    \centering
	\includegraphics[width=\linewidth]{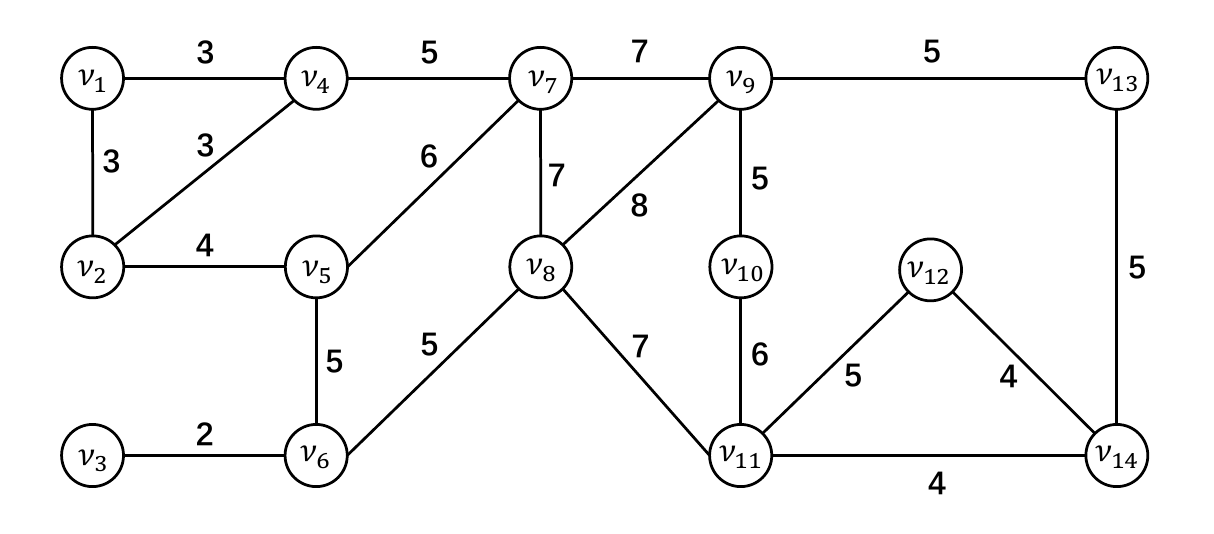}
    \caption{Initial 2-support for all edges} \label{fig:init2}
\end{figure}

\begin{example} 
Reconsider the graph in Figure~\ref{fig1}, Figure~\ref{fig:init2} shows the 
initialized 2-support values calculated for all the edges, i.e., the number on each edge.
Following the peeling approach, we start processing from the smallest value and find the edge $(v_3,v_6)$ with the current $2$-support of 2. 
We mark its $2$-trussness as 4 and delete it. The deletion of $(v_3,v_6)$ will influence the support of edges $(v_5,v_6)$ and $(v_6,v_8)$. We update their 2-supports and move the edges to the corresponding bins. Then, we repeat the above process until the graph is empty. 
\end{example}

\section{Our Solution}
\label{sec:our}

The baseline can effectively find the $h$-trussness for all edges. However, the non-parallel framework limits its scalability to large graphs. 
In this section, we first present some properties to facilitate the parallel computation of $h$-trussness (Section~\ref{subsec:hindex}). Then, we introduce the parallel computation approach developed (Section~\ref{subsec:parallel}) and some optimizations to accelerate the processing (Section~\ref{subsec:opts}).

\subsection{Convergence and Asynchrony in Higher-order Neighborhoods}
\label{subsec:hindex}

In \cite{sariyuce2017local}, the authors generalize the framework in \cite{lu2016h} to any kernel decomposition, including $k$-truss.
That is to say, for the $k$-truss model, we can use an H-index~\cite{hirsch2005index} computation paradigm to make it converge from the initial support to the trussness value.
In this paper, we use $\mathcal{H}(S)$ to denote the calculation of H-index for $S$. That is, $\mathcal{H}(S) = y$ denotes there are at least $y$ integers in $S$ that are no less than $y$. 
We briefly describe the idea as follows.

\begin{definition} [triangle edge pair]
    Given a graph $G$,
    a triangle edge pair of $e$ is the other two edges that can form a triangle with $e$. $\bigtriangleup_{ep}(e)$ denotes the set of all triangle edge pairs of $e$.
\end{definition}

Then, according to the H-index computation paradigm, for each edge $e$, we record all its triangle edge pairs. 
The initial $H_l^{(0)}sup(e)=sup_G(e)$ is used to denote 0-order H-index of edge $e$. For arbitrary integer $n$, the n-order H-index for edge $e$ is defined as 
$H_l^{(n)}sup(e)$ $=\mathcal{H}(\{min(H_l^{(n-1)}sup(e_i),$ $H_l^{(n-1)}sup(e_j))$ $|(e_i,e_j)\in \bigtriangleup_{ep}(e)\})$. 
The specific details are shown in the following lemma.

\begin{lemma}
    Given a graph $G$, for each edge $e\in G$, its H-index sequence $H_l^{(0)}sup(e),$ $H_l^{(1)}sup(e),$ $H_l^{(2)}sup(e)\dots$ 
    will converge to a value related to its trussness, i.e., $\lim_{n \to \infty} H_l^{(n)}sup(e)$ $=t(e)-2$.
\end{lemma}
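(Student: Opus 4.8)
The plan is to establish convergence of the H-index sequence to $t(e)-2$ by proving two inequalities: the sequence is monotonically non-increasing and bounded below by $t(e)-2$, and conversely its limit cannot exceed $t(e)-2$. Since the sequence takes values in the non-negative integers and is monotone and bounded, it stabilizes after finitely many steps, so the limit exists and equals the common stable value; the remaining work is to identify that value as $t(e)-2$.

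First I would prove monotonicity, i.e.\ $H_l^{(n)}sup(e)\le H_l^{(n-1)}sup(e)$ for all $n\ge 1$, by induction on $n$. The base case $H_l^{(1)}sup(e)\le H_l^{(0)}sup(e)=sup_G(e)$ holds because the multiset feeding the H-index at step $1$ has exactly $sup_G(e)$ entries (one per triangle edge pair), so its H-index is at most $sup_G(e)$. For the inductive step I would use the fact that $\mathcal{H}$ is monotone in the sense that if one multiset is entrywise dominated by another (after pairing), its H-index is no larger; combined with the inductive hypothesis $H_l^{(n-1)}sup(e_i)\le H_l^{(n-2)}sup(e_i)$ applied to each edge in every pair, and the monotonicity of $\min$, this yields $H_l^{(n)}sup(e)\le H_l^{(n-1)}sup(e)$. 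Because the values are non-negative integers, monotone non-increasing, the sequence converges to some limit $\ell(e)$ that is attained for all sufficiently large $n$.

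Next I would establish the lower bound $\ell(e)\ge t(e)-2$ by showing that $H_l^{(n)}sup(e)\ge t(e)-2$ holds for every $n$, again by induction. Let $k=t(e)$, so $e$ lies in the $k$-truss $S$, in which every edge has support at least $k-2$ and each such edge participates in at least $k-2$ triangles entirely within $S$. The key structural observation is that if every edge $f$ in $S$ satisfies $H_l^{(n-1)}sup(f)\ge k-2$, then for $e\in S$ there are at least $k-2$ triangle edge pairs $(e_i,e_j)$ with both endpoints in $S$, each contributing a value $\min(H_l^{(n-1)}sup(e_i),H_l^{(n-1)}sup(e_j))\ge k-2$; hence the H-index sees at least $k-2$ values that are all $\ge k-2$, giving $H_l^{(n)}sup(e)\ge k-2$. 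Taking the base case from $H_l^{(0)}sup(e)=sup_S(e)\ge k-2$ (by monotonicity of support under subgraphs, $sup_G(e)\ge sup_S(e)$) closes this induction.

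For the matching upper bound I would argue that the stable vector $(\ell(f))_{f\in E}$ defines a valid truss-type structure: once the sequence stabilizes, each edge $e$ satisfies $\ell(e)=\mathcal{H}(\{\min(\ell(e_i),\ell(e_j))\mid (e_i,e_j)\in\bigtriangleup_{ep}(e)\})$, which means the subgraph $T_c$ formed by edges with $\ell(f)\ge c$ has the property that each of its edges has support at least $c-2$ within $T_c$ (the defining fixed-point condition of the H-index forces at least $\ell(e)$ triangle pairs surviving among edges of value $\ge \ell(e)$). Consequently $T_c$ is contained in the $(c+2)$-truss, so $\ell(e)\le t(e)-2$. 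The main obstacle is this last direction: making precise the claim that the fixed point of the H-index recurrence recovers exactly the truss decomposition rather than merely a sub- or super-structure. This requires carefully invoking the characterization of $k$-truss as the maximal subgraph with the stated support condition and verifying that the H-index fixed point is simultaneously a valid truss (giving $\le$) and not prematurely small (giving $\ge$ from the previous paragraph); I would lean on the general kernel-decomposition equivalence of \cite{sariyuce2017local,lu2016h}, adapting their locality argument to the triangle/edge-pair setting to certify that the converged value is exactly $t(e)-2$.
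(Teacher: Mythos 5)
The paper never actually proves this lemma: it invokes it as a known consequence of the kernel-decomposition framework of \cite{sariyuce2017local} and \cite{lu2016h}, and reserves its own proofs for the higher-order generalizations (Theorem~1, the min-support lemma, and Theorem~2). Your proposal therefore supplies a self-contained argument where the paper supplies only a citation, and the argument is correct. It is also, in structure, exactly the scheme the paper later uses for its higher-order Convergence theorem: monotonicity plus integrality gives stabilization; the $k$-truss $S$ containing $e$ certifies the lower bound by a simultaneous induction over all edges of $S$ (the paper's analogue is Lemma~3 plus part $i)$ of Theorem~2's proof, instantiated with the $(t(e,h),h)$-truss); and the upper bound comes from showing that the level set of the stable values is itself a valid truss containing $e$ (the paper's analogue is the subgraph $G_s$ in part $ii)$ of Theorem~2's proof). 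So what your proof buys is an explicit, elementary verification of the $h=1$ statement that the paper treats as a black box, and it does so in a way that foreshadows the paper's own technique for the general case.

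Two small corrections. First, your phrase ``each of its edges has support at least $c-2$ within $T_c$'' is an indexing slip: the fixed-point condition gives each edge $f\in T_c$ at least $\ell(f)\geq c$ surviving triangle pairs inside $T_c$, i.e.\ support at least $c$ within $T_c$, and it is this bound (not $c-2$) that places $T_c$ inside the $(c+2)$-truss; with $c-2$ you would only get containment in the $c$-truss and the upper bound would come out two too weak. Your own parenthetical remark already states the correct count, so this is cosmetic. Second, the closing hedge --- that you would need to ``lean on'' the kernel-decomposition equivalence of \cite{sariyuce2017local,lu2016h} to certify the limit --- is unnecessary: your two directions (level sets of the fixed point form a truss, and the truss containing $e$ pins the values from below) already force the limit to equal exactly $t(e)-2$, with no external locality argument required.
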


Motivated by the above, in this paper, we further extend this paradigm to the higher-order scenarios. Suppose the key of each edge $e$ in the graph is related to its H-index, denoted as $H(e)$. The $h$-hops reachable path key is defined as follows.

\begin{definition} [$h$-hop reachable path key]
    Given a graph $G$, two vertices $v_i, v_j$ and a distance threshold $h$, 
    $P_{key}^{\leq h}(v_i,v_j)=max\{min\{H(e)|e\in p\}|p\in P\}$, where $p$ is a path from $v_i$ to $v_j$, distant $dist_G(v_i,v_j)$ $\leq h$, and $P$ is the set of paths. For simplicity, we  denote $P_{key}^{\leq h}(v_i,v_j)$ as $P(v_i,v_j)$.
\end{definition}

Based on the concepts introduced above, 
we can formally define the higher-order H-index for an edge $e$ in the graph.  
That is, $H^{(0)}sup(e)$ $=sup_G(e,h)$ and $H^{(n)}sup(e(u,v))$ $=\mathcal{H}(\{min(P^{(n)}(u,w),$ $P^{(n)}(v,w))|$ $w\in \bigtriangleup _G((u,v),h)\})$, 
where $P^{(n)}(v_i,v_j)$ $=max\{min\{H^{(n-1)}sup(e)|$ $e\in p\}|p\in P\}$.
Then, we prove the properties as follows. 

\begin{theorem}[Monotonicity]
\label{th:mono}
    Given a graph $G$, a threshold $h$ and a positive integer $n$, $H^{(n)}sup(e)\geq H^{(n+1)}sup(e)$ holds for each edge $e$ in $G$.
\end{theorem}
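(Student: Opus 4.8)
The plan is to prove the statement by induction on $n$, reducing everything to two elementary monotonicity facts, one about the H-index operator $\mathcal{H}(\cdot)$ and one about the path-key operator $P^{(\cdot)}$. The first fact I would record is that $\mathcal{H}$ is monotone under element-wise domination of equal-size multisets: if $S=\{a_1,\dots,a_m\}$ and $S'=\{b_1,\dots,b_m\}$ satisfy $a_i\geq b_i$ for all $i$, then $\mathcal{H}(S)\geq\mathcal{H}(S')$. This follows directly from the definition of $\mathcal{H}$, since any collection of $\mathcal{H}(S')$ indices witnessing $b_i\geq\mathcal{H}(S')$ also witnesses $a_i\geq\mathcal{H}(S')$. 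I would also record the bound $\mathcal{H}(S)\leq|S|$, which holds because at most $|S|$ elements can be no less than any fixed value.

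The base case is $H^{(0)}sup(e)\geq H^{(1)}sup(e)$. Here $H^{(0)}sup(e)=sup_G(e,h)=|\bigtriangleup_G(e,h)|$, while $H^{(1)}sup(e)$ is $\mathcal{H}$ applied to a multiset indexed by $\bigtriangleup_G(e,h)$, hence of size exactly $sup_G(e,h)$. Applying the bound $\mathcal{H}(S)\leq|S|$ immediately gives $H^{(1)}sup(e)\leq sup_G(e,h)=H^{(0)}sup(e)$.

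For the inductive step, assume $H^{(n-1)}sup(e)\geq H^{(n)}sup(e)$ for every edge $e$. The key observation is that $P^{(n)}$ is built from $H^{(n-1)}$ only through a minimum over the edges of a path and a maximum over paths, both order-preserving operations, and that the path set over which these are taken depends only on $G$ and $h$, not on $n$. Hence the inductive hypothesis propagates: for any two vertices $v_i,v_j$, we get $P^{(n)}(v_i,v_j)\geq P^{(n+1)}(v_i,v_j)$, because $P^{(n)}$ uses $H^{(n-1)}$ and $P^{(n+1)}$ uses $H^{(n)}$. Consequently, for $e=(u,v)$ and each common $h$-neighbor $w\in\bigtriangleup_G((u,v),h)$, we have $\min(P^{(n)}(u,w),P^{(n)}(v,w))\geq\min(P^{(n+1)}(u,w),P^{(n+1)}(v,w))$. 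Since the indexing set $\bigtriangleup_G((u,v),h)$ is identical in both cases, the two multisets have equal size and dominate element-wise, so the monotonicity of $\mathcal{H}$ yields $H^{(n)}sup(e)\geq H^{(n+1)}sup(e)$, closing the induction.

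The main obstacle I anticipate is bookkeeping rather than conceptual: making precise that the multisets defining $H^{(n)}sup(e)$ and $H^{(n+1)}sup(e)$ are indexed by the same fixed set $\bigtriangleup_G((u,v),h)$, so that ``element-wise domination'' is well-defined, and verifying that both the inner $\min$ over path edges and the outer $\max$ over paths in the definition of $P^{(n)}$ preserve the pointwise order inherited from the inductive hypothesis. Once these two points are stated carefully, the chain from the monotonicity of $H^{(n-1)}$ to that of $P^{(n)}$ and then to that of $H^{(n)}$ completes the argument.
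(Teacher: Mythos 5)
Your proposal is correct and follows essentially the same route as the paper's own proof: induction on $n$, with the base case handled by the bound $\mathcal{H}(S)\leq|S|$ applied to the multiset indexed by $\bigtriangleup_G(e,h)$, and the inductive step handled by propagating the hypothesis through the order-preserving $\min$/$\max$ operations in $P^{(\cdot)}$ and then invoking the element-wise monotonicity of $\mathcal{H}$. If anything, your write-up is more careful than the paper's, since it states explicitly the two properties of $\mathcal{H}$ and the fact that the path set and the indexing set $\bigtriangleup_G((u,v),h)$ do not depend on $n$, which the paper only uses implicitly.
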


\begin{proof}
     Let edge $e=(u,v)$. We prove the theorem by mathematical induction.

    \begin{itemize}
      
    \item[$i)$] For $n=0$, $H^{(0)}sup(e)=sup_G(e,h)$, 
    and then $H^{(1)}sup(e)=\mathcal{H} (\{min(P^{(1)}(u,w),P^{(1)}(v,w))|w\in \bigtriangleup _G(e,h)\})$.
    From the property of $\mathcal{H}(\cdot )$, we can get $H^{(1)}sup(e)\leq |\bigtriangleup _G(e,h)|=sup_G(e,h)$.
    Thus, $H^{(0)}sup(e)\geq H^{(1)}sup(e)$ holds.

    \item[$ii)$] Assuming $n=m$, $H^{(m)}sup(e)\geq H^{(m+1)}sup(e)$ holds for all the edges. Then, we have
    $H^{(m+2)}sup(e)$ $=\mathcal{H}(\{min(P^{(m+2)}(u,w),P^{(m+2)}(v,w))|w\in \bigtriangleup _G(e,h)\})$. Based on the definition of $h$-hop reachable path key, we have 
    $P^{(m+2)}(v_i,v_j)=max\{min$ $\{H^{(m+1)}sup(e)|e\in p\}|p\in P\}$, 
    $P^{(m+1)}(v_i,v_j)=max$ $\{min\{H^{(m)}sup(e)|e\in p\}|p\in P\}$.
    Therefore, $P^{(m+1)}(v_i,v_j)$ $\geq P^{(m+2)}(v_i,v_j)$.

    \vspace{1mm}
    Thus, we have $H^{(m+1)}sup(e)=\mathcal{H}(\{min(P^{(m+1)}(u,w),$ $P^{(m+1)}(v,w))|w\in \bigtriangleup _G(e,h)\})\geq \mathcal{H}(\{min\{P^{(m+2)}(u,w),$ $P^{(m+2)}(v,w)\}|w\in \bigtriangleup _G(e,h)\})$, 
    That is, $H^{(m+1)}sup(e)\geq H^{(m+2)}sup(e)$ holds.
\end{itemize}
  
    Based on $i)$ and $ii)$, the theorem is correct.
\end{proof}

\begin{lemma}
    Given a graph $G=(V,E)$, a threshold $h$ and a positive integer $n$. Let $sup_{min}(G,h)=min\{sup_G(e,h)|e\in E\}$, 
    then $H^{(n)}sup(e)\geq sup_{min}(G,h)$ holds for each edge $e$ in $G$.
\end{lemma}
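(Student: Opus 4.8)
The plan is to prove the bound by induction on $n$, propagating the lower bound $s = sup_{min}(G,h)$ through each layer of the H-index recursion. The base case $n=0$ is immediate, since $H^{(0)}sup(e)=sup_G(e,h)\geq sup_{min}(G,h)=s$ by the definition of $sup_{min}$. For the inductive step I would assume $H^{(n-1)}sup(e')\geq s$ holds for every edge $e'$ and establish the same bound at level $n$.

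First I would push the bound through the $h$-hop reachable path keys. Fix an edge $e=(u,v)$ and a common $h$-neighbor $w\in \bigtriangleup_G(e,h)$. By Definition~\ref{def: common-h-neighbor}, $w$ lies within the distance threshold of both $u$ and $v$, so the path set $P$ from $u$ to $w$ (and likewise from $v$ to $w$) is nonempty. For any such path $p$, the inductive hypothesis gives $min\{H^{(n-1)}sup(e')|e'\in p\}\geq s$, since every edge on $p$ already satisfies the bound. Taking the maximum over $P$ preserves this, so $P^{(n)}(u,w)\geq s$ and $P^{(n)}(v,w)\geq s$, whence $min(P^{(n)}(u,w),P^{(n)}(v,w))\geq s$ for every $w\in \bigtriangleup_G(e,h)$.

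It then remains to apply the $\mathcal{H}(\cdot)$ operator. The collection $\{min(P^{(n)}(u,w),P^{(n)}(v,w))|w\in \bigtriangleup_G(e,h)\}$ contains exactly $|\bigtriangleup_G(e,h)|=sup_G(e,h)\geq s$ entries, and I have just argued that each entry is at least $s$. Hence there are at least $s$ values in the collection that are no less than $s$, which by the defining property of the H-index yields $H^{(n)}sup(e)=\mathcal{H}(\cdots)\geq s$, closing the induction.

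The step I expect to require the most care is this final counting argument: the bound $\mathcal{H}(\cdot)\geq s$ needs at least $s$ entries that are $\geq s$, so it is essential to read the collection as a multiset indexed by $w$ (one entry per common $h$-neighbor) rather than as a set of distinct values, and to use that its cardinality $sup_G(e,h)$ is itself at least $s$. Both facts follow directly from the definition of $sup_{min}$, but conflating the multiset with its underlying set of values would break the argument.
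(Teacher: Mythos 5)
Your proof is correct and follows essentially the same route as the paper's: induction on $n$, pushing the bound through the path keys via the max-min definition, and then using the fact that the multiset fed to $\mathcal{H}(\cdot)$ has $sup_G(e,h)\geq sup_{min}(G,h)$ entries, each at least $sup_{min}(G,h)$. Your write-up is in fact slightly more careful than the paper's (explicitly noting the nonemptiness of the path set and the multiset-versus-set distinction in the final counting step), but the argument is the same.
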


\begin{proof}
    We prove it by mathematical induction. Let edge $e=(u,v)$.
    First, $P^{(n)}(v_i,v_j)\geq min\{H^{(n-1)}sup(e)|e\in E\}$ can be obtained from the definition of $P^{(n)}(v_i,v_j)$.

    \begin{itemize}
    \item[$i)$] For $n=0$, $\forall e\in E$, $H^{(0)}sup(e)=sup_G(e,h)\geq sup_{min}(G,h)$ holds.

    \item[$ii)$] Assuming that $n=m$, $\forall e\in E$, $H^{(m)}sup(e)\geq sup_{min}(G,h)$ holds.
    We have $H^{(m+1)}sup(e)=\mathcal{H}(\{min(P^{(m+1)}(u,w),$ $P^{(m+1)}(v,w))|w\in \bigtriangleup _G(e,h)\})$, 
    $P^{(m+1)}(u,w)\geq min\{H^{(m)}sup(e)|e\in E\}\geq sup_{min}(G,h)$ and $P^{(m+1)}(v,m)$ is the same.
    So, here is $min(P^{(m+1)}(u,w),P^{(m+1)}(v,w))\geq sup_{min}(G,h)$.
    And $|\bigtriangleup _G(e,h)|=sup_G(e,h)\geq sup_{min}(G,h)$ 
    Thus, $H^{(m+1)}sup(e)\geq sup_{min}(G,h)$ holds.
    \end{itemize}

    Based on $i)$ and $ii)$, the lemma is correct.
\end{proof}

\begin{theorem}[Convergence]
\label{th:conver}
    Given a graph $G$, a threshold $h$ and a positive integer $n$,  we have $\lim_{n \to \infty} H^{(n)}sup(e)=t(e,h)-2$ holds for each edge $e$ in $G$.
\end{theorem}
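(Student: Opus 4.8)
The plan is to first show the sequence genuinely converges, and then identify its limit with $t(e,h)-2$ through a two-sided bound. By the Monotonicity theorem (Theorem~\ref{th:mono}) the sequence $\{H^{(n)}sup(e)\}_{n\geq 0}$ is non-increasing, and by the preceding lemma it is bounded below by $sup_{min}(G,h)\geq 0$. Since every $H^{(n)}sup(e)$ is a non-negative integer, a non-increasing integer sequence bounded below must stabilize after finitely many steps; denote its limit by $L(e)$, which therefore satisfies the fixed-point equation $L(e)=\mathcal{H}(\{\min(P(u,w),P(v,w))\mid w\in\bigtriangleup_G(e,h)\})$ with $P(v_i,v_j)=\max\{\min\{L(e')\mid e'\in p\}\mid p\in P\}$ evaluated at the limit. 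It then suffices to prove $L(e)=t(e,h)-2$, which I would split into $L(e)\geq t(e,h)-2$ and $L(e)\leq t(e,h)-2$.

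For the lower bound, write $k=t(e,h)$ and let $T$ be the $(k,h)$-truss containing $e$. I would show by induction on $n$ that $H^{(n)}sup(f)\geq k-2$ for every edge $f\in T$. The base case holds because $H^{(0)}sup(f)=sup_G(f,h)\geq sup_T(f,h)\geq k-2$, using $T\subseteq G$. For the inductive step, fix $f=(u,v)\in T$; since $f$ has at least $k-2$ common $h$-neighbors inside $T$, for each such $w$ there are paths $u\to w$ and $v\to w$ of length at most $h$ lying entirely in $T$, all of whose edges satisfy $H^{(n)}sup\geq k-2$ by the induction hypothesis. Hence these paths witness $P^{(n+1)}(u,w)\geq k-2$ and $P^{(n+1)}(v,w)\geq k-2$, so $\min(P^{(n+1)}(u,w),P^{(n+1)}(v,w))\geq k-2$; as $\bigtriangleup_T(f,h)\subseteq\bigtriangleup_G(f,h)$ there are at least $k-2$ such entries, and the definition of $\mathcal{H}(\cdot)$ yields $H^{(n+1)}sup(f)\geq k-2$. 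Passing to the limit gives $L(e)\geq k-2$.

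For the upper bound I would run the argument in reverse. Let $\ell=L(e)$ and consider the edge set $F=\{f\in E\mid L(f)\geq\ell\}$. The fixed-point equation together with $L(f)\geq\ell$ says that at least $\ell$ vertices $w\in\bigtriangleup_G(f,h)$ satisfy $\min(P(u,w),P(v,w))\geq\ell$; for each such $w$, the value $P(u,w)\geq\ell$ produces a path $u\to w$ of length at most $h$ all of whose edges lie in $F$, and likewise for $v$, so $w$ is a common $h$-neighbor of $f$ computed within $F$. Thus every edge of $F$ retains at least $\ell$ common $h$-neighbors using only edges of $F$, which certifies that $F$ (together with its endpoints) meets the $h$-support condition of an $(\ell+2,h)$-truss. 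Since $e\in F$, this gives $t(e,h)\geq\ell+2$, i.e. $L(e)\leq t(e,h)-2$, completing the equality.

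The main obstacle I anticipate is the upper bound, and specifically reconciling the path-key machinery with the subgraph semantics of the truss definition: the set $F$ built from the fixed point need not be vertex-induced, whereas the paper's subgraphs are. The key lemma I would need is that the property ``every edge has at least $k-2$ common $h$-neighbors reachable within the subgraph'' is closed under union --- adding edges can only shorten distances and hence enlarge common-$h$-neighbor sets --- so that a maximal such subgraph (the $(k,h)$-truss) is well defined and $t(e,h)\geq k$ is equivalent to membership in any, not necessarily induced, edge set satisfying the condition. Establishing this monotonicity of $h$-support under edge addition, and using it to legitimize $F$ as a witness for $t(e,h)\geq\ell+2$, is the delicate point; the lower bound, by contrast, only uses the easy direction that paths inside the truss survive the H-index recursion.
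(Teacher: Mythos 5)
Your proposal is correct and follows essentially the same route as the paper's proof: the same two-sided decomposition, with the lower bound coming from the fact that paths inside the $(t(e,h),h)$-truss keep every H-index iterate at least $t(e,h)-2$ (the paper phrases this via subgraph monotonicity plus its $sup_{min}$ lemma, you inline it as one induction), and the upper bound built from exactly the paper's witness set --- your $F$ is the paper's $E_s$ --- shown to certify an $(\ell+2,h)$-truss containing $e$. The differences are presentational rather than substantive: you make the existence of the limit explicit (non-increasing, integer-valued, bounded below) and you flag the vertex-induced versus edge-induced subgraph wrinkle, both of which the paper glosses over.
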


\begin{proof}
   Suppose the edge $e=(u,v)$. We prove the theorem in two steps. 

    \begin{itemize}
    \item[$i)$] $H^{(\infty )}sup(e)\geq t(e,h)-2$.
    Let $g$ be $(t(e,h),h)$-truss containing $e$ in $G$, and $sup_{min}(g,h)=t(e,h)-2$ can be obtained from the property of the $k$-truss model, 
    so there is $H^{(n)}sup(e)(e\in G)\geq H^{(n)}sup(e)(e\in g)$.
    And by lemma 5 we can get $H^{(n)}sup(e)(e\in g)\geq sup_{min}(g,h)=t(e,h)-2$, so $H^{(n)}sup(e)(e\in G)\geq t(e,h)-2$ holds.

    \item[$ii)$] $t(e,h)\geq H^{(\infty )}sup(e)+2$.
    Let the edge set $E_s=\{e_s:e_s\in E\land H^{(\infty )}sup(e_s)\geq H^{(\infty )}sup(e)\}$, and let $G_s$ be the subgraph of $G$ induced by $E_s\cup e$.
    From definition $H^{(\infty )}sup(e)=\mathcal{H}(\{min(P^{(\infty )}(u,w),P^{(\infty )}(v,w))|w\in \bigtriangleup _G(e,h)\})$, 
    we can get $a)$ $|\bigtriangleup _G(e,h)|\geq H^{(\infty )}sup(e)$, 
    and $b)$ $\forall w\in \bigtriangleup _G(e,h)$, $min(P^{(\infty )}(u,w),P^{(\infty )}(v,w))\geq H^{(\infty )}sup(e)$ holds.
    From definition 18, we know $P^{(\infty )}(v_i,v_j)=max\{min\{H^{(\infty )}sup(e_l)|e_l\in p\}|p\in P\}$, 
    we get for $\forall w\in \bigtriangleup _G(e,h)$, there must be $p_i\in P$ in the path $p\in P$ from any endpoint ($u$ or $v$) of edge $e$ to $w$ 
    such that $min\{H^{(\infty )}sup(e_i)|e_i\in p_i\}\geq H^{(\infty )}sup(e)$.
    Let the set of all $p_i$ be $P_i$, then for $e_i\in P_i$ there is $H^{(\infty )}sup(e_i)\geq H^{(\infty )}sup(e)$.
    Let $E_i=\{e_i:e_i\in P_i\}$, and $E_i\subseteq E_s$ can be obtained by the definition of $E_s$, so there is $sup_{G_s}(e,h)\geq H^{(\infty )}sup(e)$.
    For $\forall e_s\in E_s$, $H^{(\infty )}sup(e_s)\geq H^{(\infty )}sup(e)$. Similarly, $sup_{G_s}(e_s,h)\geq H^{(\infty )}sup(e)$ can be obtained.
    Obviously, the above proves that $G_s$ is a $(H^{(\infty )}sup(e)+2,h)$-truss, and because of $e\in G_s$, the $h$-trussness of $e$ is $t(e,h)\geq H^{(\infty )}sup(e)+2$.
    \end{itemize}
    
    Based on $i)$ and $ii)$, the theorem is correct.
\end{proof}

\myparagraph{Asynchronous updating}
Based on the monotonic property in Theorem~\ref{th:mono}, we know that 
the $H^{(n)}sup(e)$ value obtained for each edge $e\in E$ in the H-index computation paradigm is non-increasing. Based on the convergence property in Theorem~\ref{th:conver}, for any positive integer $i$,  we have  $H^{(i)}sup(e)$  no less than $t(e,h)-2$.
More specifically, based on the definition of $(k,h)$-truss, we have $\mathcal{H}(\{min(P(u,w),P(v,w))|w\in \bigtriangleup _G(e,h)\})$ for each edge $e=(u,v)\in E$, 
where $P(a,b)=max\{min\{t(e_0,h)-2|e_0\in p\}|p\in P\}$.
When extending to $m$-order in the whole process, even if we use some of the calculated associated edge $e'$'s $H^{(m)}sup(e')$ directly in the calculation of $H^{(m)}sup(e)$ 
instead of using $H^{(m-1)}sup(e')$ strictly in order, it also satisfies monotonicity and convergence. 
So asynchronous updating can still guarantee a convergence to the $h$-trussness correlation value $t(e,h)-2$.

%------------------------------------Algorithm 6------------------------------------
\begin{algorithm}[t]
        \caption{Parallel Decomposition Framework} 
        \label{alg:para}
        \KwIn{graph $G(V,E)$, distance threshold $h$} 
        \KwOut{the $h$-trussness for each $e\in E$}    
        \ForEach{$e\in E$ in parallel}
        {
            compute $h$-support $sup_G(e,h)$\;
            $H^{(0)}sup(e)\leftarrow sup_G(e,h)$\;
        }
        $flag\leftarrow true$\;
        $n\leftarrow 0$\;
        \While{$flag$} {
        $flag\leftarrow false$\;
        \ForEach{$e\in E$ in parallel}
        {
            
            $n\leftarrow n+1$\;
            $H^{(n)}sup(e)\leftarrow $ Algorithm~\ref{alg:calh}\;
            \If{$H^{(n)}sup(e)\neq H^{(n-1)}sup(e)$}
            {
                $flag\leftarrow true$\;
            }
        }
        }
        $t(e,h)\leftarrow H^{(n)}sup(e)+2$ for each $e\in E$\;
        $T\leftarrow \{t(e,h)|e\in E\}$\;
        \textbf{return} $T$\;
\end{algorithm}

\subsection{Parallel Algorithm}
\label{subsec:parallel}

Although the baseline algorithm can successfully achieves the decomposition of higher-order truss. However, it suffers from some limitations, 
such as low parallelism and high cost of edge deletion for $h$-support updates. 
To solve these problems, we introduce a parallel framework in this section.  
Some optimization techniques will be discussed in the next section.

According to Theorem~\ref{th:conver}, when $n$ of $H^{(n)}sup(e)$ is large enough, $H^{(n)}sup(e)$ can converge to $t(e, h)-2$ of $e$. 
Therefore, the general idea of the parallel decomposition framework is to 
is to iteratively compute $H^{(n)}sup(e)$ in parallel for each edge $e\in E$ until it converges.
In addition, based on the definition, we can easily obtain that when $H^{(n)}sup(e)=H^{(n-1)}sup(e)$ for all edges $e\in E$, it can be determined that $H^{(n)}sup(e)$ has converged. 
That is, for each $e\in E$ $H^{(n)}sup(e)=H^{(n-1)}sup(e)$ afterward, and by analogy, it can be obtained that with $i>0$, $H^{(n+i)}sup(e)$ will not change.
The parallel framework is shown in Algorithm~\ref{alg:para}.

%------------------------------------Algorithm 7------------------------------------
\begin{algorithm}[t]
    \caption{Compute$H^{(n)}sup$} 
    \label{alg:calh}
    \KwIn{$H^{(n-1)}sup(e')$ for each $e'\in \mathcal{E} _G(e,h)$} 
    \KwOut{$H^{(n)}sup(e)$} 
    Initialization: $Q_1\leftarrow \emptyset $; visited$[\cdot]\leftarrow false$\;
    \ForEach{node $u\in e$}
    {
        \ForEach{$v\in N_G(u,1)$}
        {
           $P(u,v)\leftarrow H^{(n-1)}sup((u,v))$\; 
           $P'(u,v)\leftarrow H^{(n-1)}sup((u,v))$\;
           $Q_1.push(v)$; visited$[v]\leftarrow true$\;
        }
        $d\leftarrow 2$\;
        \While(){$d\leq h$}
        {
           $Q_2\leftarrow \emptyset $\;
           \While{$Q_1\neq \emptyset$}
           {
            $v\leftarrow Q_1.pop()$\;
            \ForEach{$w\in N_G(v,1)\land \bigtriangleup _G(e,h)$}
            {
                $new\_value$ $\leftarrow min\{H^{(n-1)}sup((v,w)),P'(u,v)\}$\;
                \If{visited $[w]$=false $\lor$ $new\_value$ $>P'(u,w)$}
                {
                    $P(u,w)\leftarrow new\_value$\;
                    $Q_2.push(u,w)$; visited$[w]\leftarrow true$\;
                }
            }
            \textbf{foreach} $w\in N_G(u,h)$  \textbf{do} $P'(u,w)\leftarrow P(u,w)$\;
           }
           $d\leftarrow d+1$; $Q_2\leftarrow Q_1$\;
        }
    }
    $H^{(n)}sup(e(u,v))\leftarrow \mathcal{H}(\{min(P^{(n)}(u,w),P^{(n)}(v,w))|w\in \bigtriangleup _G((u,v),h)\})$\;
    
\end{algorithm}

In Algorithm~\ref{alg:para}, we first calculate the $h$-support of all edges and set the support of 0-order H-index. That is, for each edge $e\in E$, its $H^{(0)}sup(e)$ is equal to its $h$-support (lines 1-3).
Then, we invoke Algorithm~\ref{alg:calh} for each edge $e$ in parallel to calculate the H-index of the support of the next order, and stop the iteration until $H^{(n)}sup(e)=H^{(n-1)}sup(e)$ for all edges (lines 6-12).
According to our previous theoretical analysis, the $h$-trussness of edge $e$ is equal to its $H^{(n)}sup(e)+2$. Finally, assign this value and return the the result (lines 13-15).

Algorithm~\ref{alg:calh} presents the details of computing the support of n-order H-index $H^{(n)}sup(e)$ for edge $e$.
We take $H^{(n-1)}sup(e')$ of the common h-neighbors $e'\in \mathcal{E} _G(e,h)$ of edge $e$ as input.
For each endpoint $u$ of $e$ to the $h$-hops reachable path key of its common $h$-neighbor, i.e., $P(u,w)$, 
we use a BFS method to tranverse from 1-hop to $h$ hops from $u$.
First, we compute the path key from $u$ to its 1-hop neighbor $v$ (lines 3-6), and then compute the path key to each neighbor $w$ of $v$.
We iterate this process until $d>h$ (lines 7-18).
We check $w\in \bigtriangleup _G(e,h)$ for each iteration, if it has not been visited, or there is a longer path that can generate a larger path key from $u$ to $w$, 
we assign $P(u,w)$ to the minimum value of $P'(u,v)$ of the previous path and $H^{(n)}sup((v,w))$ (lines 12-15).
Finally, the H-index computation of all $min(P^{(n)}(u,w),P^{(n)}(v,w))$ is performed to obtain the $H^{(n)}sup(e)$ of $e$ in line 19.

\begin{figure*}[t]
    \centering
    \includegraphics[width=0.9\linewidth]{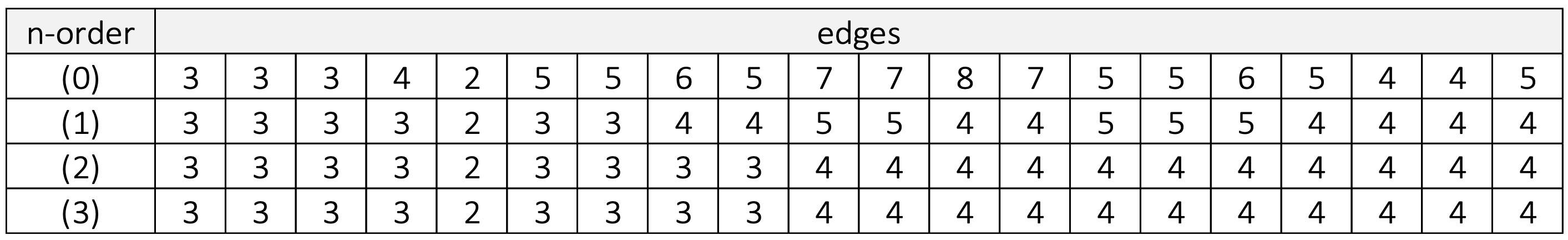}	
    \caption{Example of parallel iterative process} 
    \label{fig:paral}
\end{figure*}

\begin{example}
    Reconsider the example in Figure~\ref{fig:init2}. Figure~\ref{fig:paral}
    shows the updated support for all edges in each iteration. 
    $(i)$-order line represents the updated H-index in ith iterations. 
    $(0)$-order line represents the initial $2$-support of all edges.
    It shows that after four rounds of update iterations, 
    the $Hsup(e)$ of all edges are determined to converge to the point, i.e., equal to their 2-trussness.
\end{example}

\subsection{Optimizations}
\label{subsec:opts}

In this section, we propose some optimizations to further accelerate our proposed parallel algorithm.

\myparagraph{Asynchronous strategy}
In Algorithm~\ref{alg:calh}, each iteration is strictly in accordance with the provisions of using $(n-1)$-order to calculate $n$-order. 
However, in fact, according to the $n$-order H-index feature discussed above, we can update it asynchronously.
Before computing $H^{(n)}sup(e)$, the value of $H^{(n)}sup(e')$ may have already been computed for some $e'\in \mathcal{E} _G(e, h)$. 
Therefore, if we use $H^{(n)}sup(e')$ instead of $H^{(n-1)}sup(e')$ to calculate $H^{(n)}sup(e)$, it would accelerate the convergence
The correctness of this optimization is easy to be verified based on the analysis in Section~\ref{subsec:hindex}.

\myparagraph{Redundant computation pruning strategy}
There will be some redundant computation involved during the iterative computation of H-index. 
As a simple example, if there is an edge $e$, for each edge $e'\in \mathcal{E} _G (e,h)$ there is $H^{(n-2)}sup(e')=H^{(n-1)}sup(e')$.
Then, it is easy to get $H^{(n)}sup(e)=H^{(n-1)}sup(e)$. Thus, for the calculation of the edge $e$ of the $n$ order, 
the result can be directly obtained based on $(n-1)$ order.
    
In order to explore the decidable redundant computation in more detail, 
we study the effect of only one edge $e'\in \mathcal{E} _G (e,h)$ on the H-index computation result of edge $e$.
That is, except that $H^{(n-2)}sup(e')$ and $H^{(n-1)}sup(e')$ of one edge $e'\in \mathcal{E} _G (e,h)$ are unknown, 
and the other edges $e''\in \mathcal{E} _G (e,h)$ all satisfy $H^{(n-2)}sup(e'')=H^{(n-1)}sup(e'')$.

\begin{lemma}
    Given a graph $G$ and a distance threshold $h$, in the process of parallel truss decomposition, 
    the following cases $e'\in \mathcal{E} _G (e,h)$ will not trigger the change (i.e., decrease) of  $H^{(n)}sup(e)$.

    \begin{itemize}
        \item[$i)$] $H^{(n-2)}sup(e')=H^{(n-1)}sup(e')$;

    \item[$ii)$] $H^{(n-2)}sup(e')>H^{(n-1)}sup(e')$ and $H^{(n-2)}sup$ $(e')<H^{(n-1)}sup(e)$;

    \item[$iii)$] $H^{(n-2)}sup(e')>H^{(n-1)}sup(e')$ and $H^{(n-1)}$ $sup(e')\geq H^{(n-1)}sup(e)$.
    \end{itemize}
\end{lemma}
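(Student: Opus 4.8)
The plan is to analyze each of the three cases by tracking how a single changed edge $e'$ can propagate through the $h$-hop reachable path keys $P^{(n)}(u,w)$ and ultimately through the H-index computation $\mathcal{H}(\cdot)$ that defines $H^{(n)}sup(e)$. The starting point is the setup fixed in the statement: all edges $e'' \in \mathcal{E}_G(e,h)$ except one edge $e'$ satisfy $H^{(n-2)}sup(e'')=H^{(n-1)}sup(e'')$. Consequently, every path key $P^{(n)}(u,w)$ and $P^{(n+1)}(u,w)$ agrees whenever the witnessing path avoids $e'$, so the only way $H^{(n)}sup(e)$ can differ from $H^{(n-1)}sup(e)$ is through a path that actually uses $e'$. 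I would isolate this observation first as a single reusable claim, since all three cases reduce to reasoning about paths through $e'$.

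For case $i)$, where $H^{(n-2)}sup(e')=H^{(n-1)}sup(e')$, the argument is immediate: since $e'$ too is unchanged between orders $n-2$ and $n-1$, every term $\min\{H^{(n-1)}sup(e_0)\mid e_0\in p\}$ equals the corresponding term one order earlier, so $P^{(n)}(u,w)=P^{(n-1)}(u,w)$ for all relevant $w$, and applying $\mathcal{H}(\cdot)$ to identical multisets yields $H^{(n)}sup(e)=H^{(n-1)}sup(e)$ — no decrease. For cases $ii)$ and $iii)$ the value on $e'$ genuinely drops, so I would examine how that drop affects the inner minimum $\min\{H^{(n-1)}sup(e_0)\mid e_0\in p\}$ along a path $p$ containing $e'$, and then the outer maximum over paths defining $P^{(n)}(u,w)$, and finally the $\min(P^{(n)}(u,w),P^{(n)}(v,w))$ term fed into $\mathcal{H}$. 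In case $ii)$ the old value $H^{(n-2)}sup(e')$ was already strictly below $H^{(n-1)}sup(e)$, so $e'$ never contributed to any path-min large enough to influence the $\mathcal{H}$ computation at the threshold $H^{(n-1)}sup(e)$; lowering it further cannot remove any term that was at or above that threshold, hence $\mathcal{H}$ is unchanged. In case $iii)$, even the new value $H^{(n-1)}sup(e')$ stays at least $H^{(n-1)}sup(e)$, so any path-min routed through $e'$ remains bounded below by $H^{(n-1)}sup(e)$; the term survives the threshold and again the count of values reaching $H^{(n-1)}sup(e)$ cannot fall.

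The technical heart, and the step I expect to be the main obstacle, is controlling the interaction between the outer maximum over paths and the change on $e'$: a single path key $P^{(n)}(u,w)$ is a max over many paths, so I must argue that if the best path happens to route through $e'$, then either (in case $ii$/$iii$) the value along that path stays above the relevant threshold, or an alternative $e'$-avoiding path already witnesses a value at least $H^{(n-1)}sup(e)$. This requires pinning down, for each of the common $h$-neighbors $w$ counted in the old H-index, that the corresponding path key did not depend on $e'$ dropping below $H^{(n-1)}sup(e)$. I would handle this by a careful bookkeeping argument on the multiset fed to $\mathcal{H}(\cdot)$: show that no element which was $\geq H^{(n-1)}sup(e)$ at order $n-1$ can drop strictly below $H^{(n-1)}sup(e)$ at order $n$ under the case hypotheses, which by the definition of the H-index (at least $y$ values no less than $y$) preserves the value $H^{(n-1)}sup(e)$. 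Throughout I would lean on the monotonicity already established in Theorem~\ref{th:mono} to know the sequences are non-increasing, which rules out spurious increases and lets me argue purely about whether a decrease can be forced.
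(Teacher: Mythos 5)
The paper offers no proof of this lemma to compare against --- it states only that ``The proof is omitted and can be verified easily'' --- and your argument is correct and is evidently the intended one: reduce to paths through the single changed edge $e'$, then show via bookkeeping on the multiset fed to $\mathcal{H}(\cdot)$ that under each case no element that was $\geq H^{(n-1)}sup(e)$ at order $n-1$ can fall below that threshold at order $n$ (in case $ii)$ because every witnessing path with path-min $\geq H^{(n-1)}sup(e)$ must avoid $e'$ and is therefore unchanged, in case $iii)$ because $e'$'s new value still clears the threshold), which by the definition of the H-index forces $H^{(n)}sup(e)\geq H^{(n-1)}sup(e)$, and Theorem~\ref{th:mono} upgrades this to equality. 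The only blemishes are cosmetic: the opening claim has an index slip (the relevant comparison is $P^{(n-1)}$ versus $P^{(n)}$, since the hypotheses concern $H^{(n-2)}sup$ versus $H^{(n-1)}sup$, not $P^{(n)}$ versus $P^{(n+1)}$), and in case $iii)$ the sentence ``any path-min routed through $e'$ remains bounded below by $H^{(n-1)}sup(e)$'' is literally false when other edges on the path are small --- the correct statement, which your final paragraph does use, is that a path-min that was already $\geq H^{(n-1)}sup(e)$ cannot be dragged below that value by the drop on $e'$.
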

    
The proof is omitted and can be verified easily. 
Therefore, for each edge $e\in E$, if all its edges $e'\in \mathcal{E} _G (e,h)$ satisfy $i)$, $ii)$ or $iii)$ in Lemma 4, 
then we can directly get $H^{(n)}sup(e)=H^{(n-1)}sup(e)$ instead of computing $H^{(n)}sup(e)$.

\begin{table}[t]
\caption{Dataset Statistics}
\label{ta:data}
\begin{tabular}{cll}
\hline
Dataset                   & $|V|$     & $|E|$       \\
\hline
\hline
Yeast (YT)                & 1,870   & 2,227     \\
Human proteins Vidal (VL) & 3,133   & 6,726     \\
Sister cities (SC)        & 14,274  & 20,573    \\
Gnutella 30 (GA)          & 36,682  & 88,328    \\
Amazon TWEB 0302 (AM)     & 262,111 & 1,234,877 \\
Amazon MDS (AN)           & 334,863 & 925,872  \\
\hline
\end{tabular}
\end{table}

\section{Experiments}

In this section, we conduct extensive experiments on 6 real-world datasets to evaluate the performance of proposed techniques.

\subsection{Experiment setup}

\myparagraph{Algorithms} To evaluate the performance of proposed methods, we implement the follow algorithms in the experiment.

\begin{itemize}
    \item \textbf{Base.} The baseline algorithm proposed in Section~\ref{sec:base}.
    \item \textbf{Paral.} The parallel framework proposed in Section~\ref{subsec:parallel}.
    \item \textbf{Single.} The algorithm will perform a serial execution of parallel algorithm, i.e., Paral with single-threaded execution. It serves as a comparison algorithm that tests parallel acceleration ratio.
    \item \textbf{Asyn.} Paral equips with asynchronous update strategy proposed in Section~\ref{subsec:opts}. 
    \item \textbf{Paral+.} The parallel solution with all the optimization proposed in Section~\ref{subsec:opts}
\end{itemize}

\myparagraph{Dataset and workload} We employ 6 real-world graphs for experiment evaluation. The details are shown in Table~\ref{ta:data}. The datasets are public available on KONECT\footnote{http://konect.cc/networks/}. 
We implement all algorithms in C++ and all experiments are performed on a server with 2 Intel Xeon 2.10GHz CPUs and 256GB RAM running Linux. 
All algorithms are parallelized using OpenMP and by default use 20 threads for parallel computation. We mark the algorithm as \textbf{INF}, if it cannot be finished in 4 days. 
For each settings, we run 10 times and report the average value.

\begin{figure}[t]
    \centering
    \includegraphics[width=0.9\linewidth]{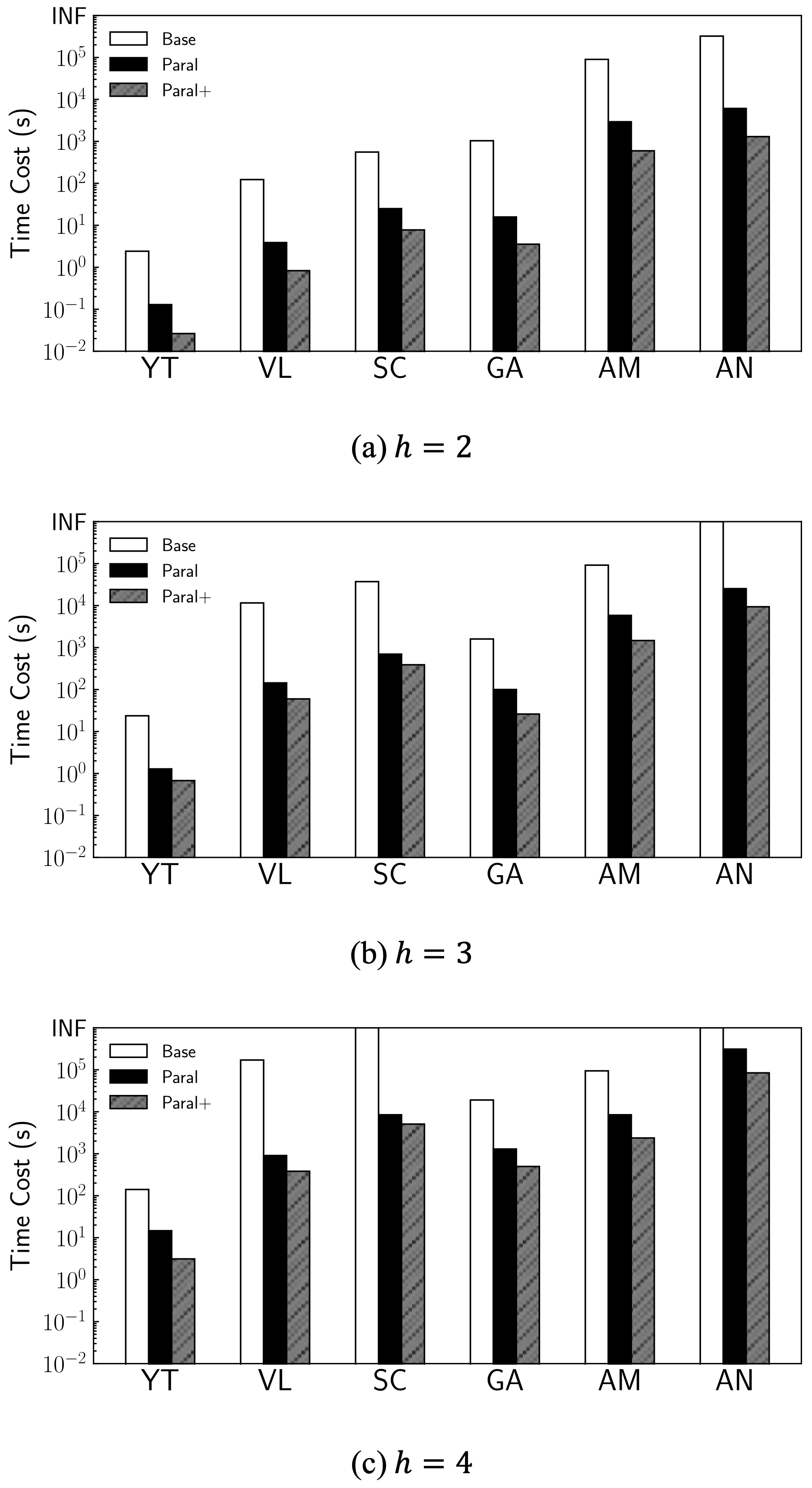}
    \caption{Efficiency evaluation on all the datasets}
    \label{fig:all_effi}
\end{figure}

\subsection{Evaluation Results}

\myparagraph{Efficiency evaluation on all datasets} Figure~\ref{fig:all_effi}, we report the response time of Base, Paral and Paral+ on all the dataset under different $h$ values. As we can see, parallel approaches are much faster than the baselines. Compared with Base, Paral+ can achieve up to 3 orders of magnitude speedup, demonstrating the advantage of parallel framework. Paral+ is faster than Paral, due to the further optimization techniques proposed in Section~\ref{subsec:opts}. Moreover, with the increase of $h$, more processing time is required by all the algorithms because of the larger search space involved. 

\myparagraph{Evaluation of parallel framework} To evaluate the impact of parallel framework, we report the speedup ratio by comparing Paral and Single. We conduct the experiments on 4 datasets, i.e., YT, VL, GA, and AM, where Base can finish in reasonable timeﬁ. The results are shown in Figure~\ref{fig:impact_para} by varying the number of threads. Note that, Single denotes the case with one thread. As we can observe, the speedup of parallel framework becomes larger with the increase of thread number. Paral is much more effective than single mode. For example, in GA, the speedup can achieve 32 with $h=3,4$. In some cases, the speedup is not significant. For example, in GA, the speedup is only 11 with 32 threads. This is because, many computations converge in first few rounds. The gain in rest computation is not significant by increasing the number of threads.

\begin{figure}[t]
    \centering
    \includegraphics[width=0.9\linewidth]{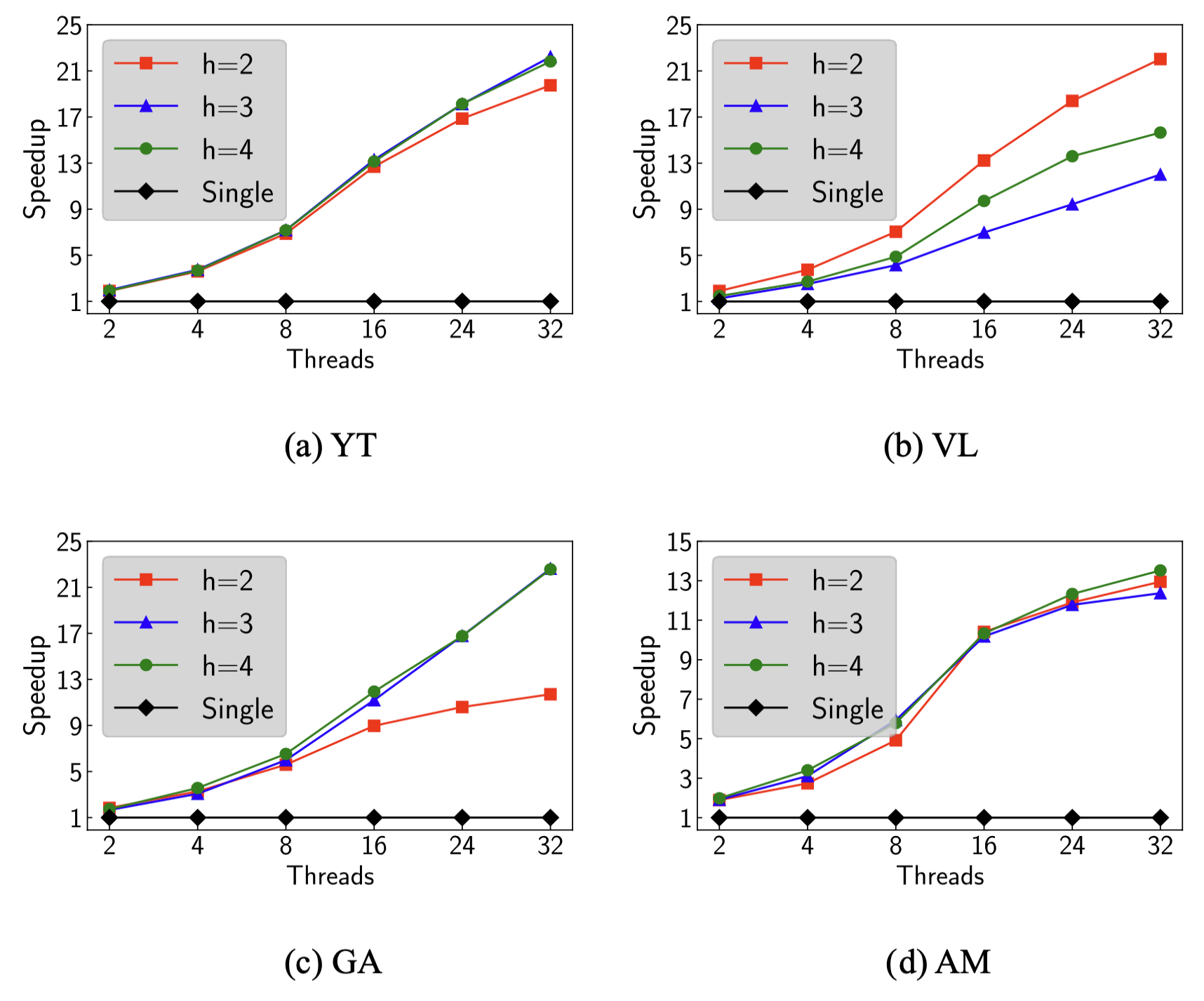}
    \caption{Evaluation of parallel framework}
    \label{fig:impact_para}
\end{figure}

\begin{figure}[t]
    \centering
    \includegraphics[width=0.9\linewidth]{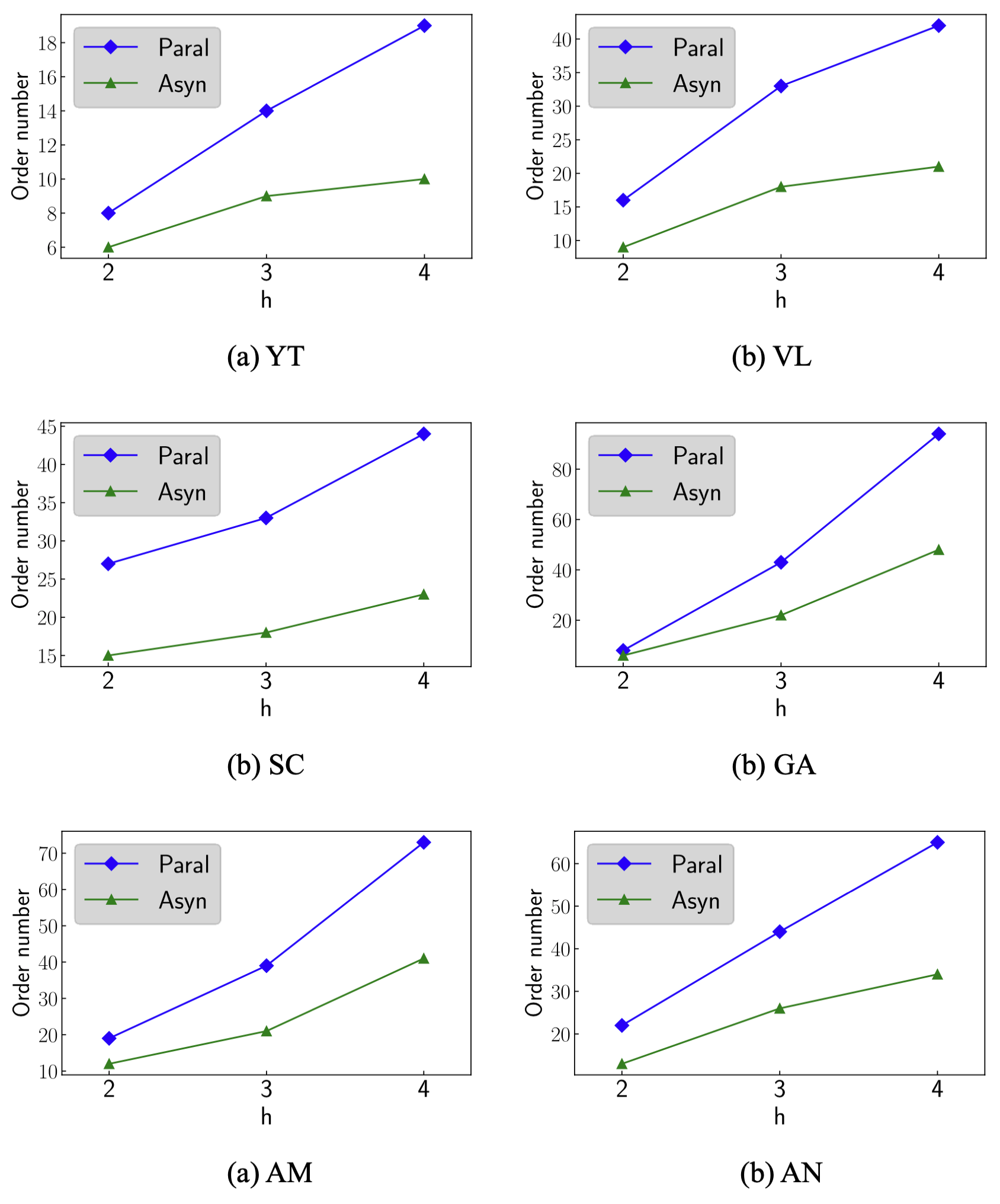}
    \caption{Evaluation of asynchronous strategy}
    \label{fig:order}
\end{figure}

\myparagraph{Evaluation of asynchronous strategy} To evaluate the impact of asynchronous strategy, we conduct the experiments on all the datasets by varing the value of $h$, and report the order number (i.e., iteration number) of Paral and Asyn. The results are shown in Figure~\ref{fig:order}. As we can see, Asyn can achieve better performance than Paral.  The Asyn algorithm can
reduce it up by nearly half. This result reflects the
effectiveness of asynchronous optimization strategy.

\section{Related work}

As one of the fundamental problems in graph analysis, cohesive subgraph discovery has been extensively studied in recent years. 
Various cohesive subgraph models have been proposed, such as $k$-core~\cite{sun2020stable},  $k$-peak~\cite{tan2023higher}, and clique~\cite{sun2023clique}. 
Among these, the $k$-truss model attracts significant interest in the literature, as it achieves a good balance between structural constraints and computational overhead. 
$k$-core requires that each node has at least $k$-neighbors.
$k$-truss further enhances the cohesion within the subgraphs identified by the $k$-core model, by requiring that the edges in a $k$-truss involved in at least $k-2$ triangles.
The cohesive constraints in $k$-truss is stronger than the $k$-core model, while weaker than the clique model, (i.e. complete subgraphs) allowing for more efficient computations.
$k$-truss can find many applications in network analysis, such as hierarchical structure analysis~\cite{cohen2008trusses}, graph visualization~\cite{huang2016truss}, and community detection~\cite{chen2014distributed}.

% Graph decomposition based on cohesive subgraph models is a fundamental problem[6][7][8]. 
% Traditional graph decomposition mainly focuses on pair-wise interactions between vertices.
% With the rapid explosion of information, the relationships between different entities are not limited to direct interactions. 
% It is imperative to study the higher-order structure in graph analysis.

Nonetheless, $k$-truss lacks the ability to uncover the more fine-grained structures hidden in graphs. 
The importance of higher-order structures is well-established to identify useful features of complex networks, that is, considering the interactions between $h$-hop neighbors. 
Recently, several studies have integrated $h$-hop neighbors into cohesive subgraph models.
For example, the higher-order $k$-core was first proposed in~\cite{batagelj2011fast}, known as the $(k, h)$-core. 
Subsequent work in~\cite{bonchi2019distance} proposed an advanced algorithm to speed up the $(k, h)$-core computation. 
\cite{tan2023higher} extended the $k$-peak model to higher-order scenario and proposed the $(k,h)$-peak model. 
Higher-order $k$-truss model was first proposed in~\cite{chen2021higher}, where the authors designed efficient algorithms for $(k,h)$-truss decomposition.
However, none of the existing studies considers efficient $(k,h)$-truss decomposition using parallel method.

\section{Conclusion}

$k$-truss decomposition is widely studied in the literature to capture the properties of networks. However, the conventional $k$-truss neglects the higher-order information and limits its applications. In this paper, we investigate the problem of higher-order truss decomposition and propose the first parallel solution. In addition, several optimization techniques are proposed to accelerate the processing. Finally, we conduct experiments on real-world datasets to verify the advantage of proposed methods.

\begin{acks}
The work was supported by AEGiS (888/008/276).
\end{acks}

%\clearpage

\bibliographystyle{ACM-Reference-Format}
\bibliography{main}

%%% -*-BibTeX-*-
%%% Do NOT edit. File created by BibTeX with style
%%% ACM-Reference-Format-Journals [18-Jan-2012].

\begin{thebibliography}{19}

%%% ====================================================================
%%% NOTE TO THE USER: you can override these defaults by providing
%%% customized versions of any of these macros before the \bibliography
%%% command.  Each of them MUST provide its own final punctuation,
%%% except for \shownote{}, \showDOI{}, and \showURL{}.  The latter two
%%% do not use final punctuation, in order to avoid confusing it with
%%% the Web address.
%%%
%%% To suppress output of a particular field, define its macro to expand
%%% to an empty string, or better, \unskip, like this:
%%%
%%% \newcommand{\showDOI}[1]{\unskip}   % LaTeX syntax
%%%
%%% \def \showDOI #1{\unskip}           % plain TeX syntax
%%%
%%% ====================================================================

\ifx \showCODEN    \undefined \def \showCODEN     #1{\unskip}     \fi
\ifx \showDOI      \undefined \def \showDOI       #1{#1}\fi
\ifx \showISBNx    \undefined \def \showISBNx     #1{\unskip}     \fi
\ifx \showISBNxiii \undefined \def \showISBNxiii  #1{\unskip}     \fi
\ifx \showISSN     \undefined \def \showISSN      #1{\unskip}     \fi
\ifx \showLCCN     \undefined \def \showLCCN      #1{\unskip}     \fi
\ifx \shownote     \undefined \def \shownote      #1{#1}          \fi
\ifx \showarticletitle \undefined \def \showarticletitle #1{#1}   \fi
\ifx \showURL      \undefined \def \showURL       {\relax}        \fi
% The following commands are used for tagged output and should be
% invisible to TeX
\providecommand\bibfield[2]{#2}
\providecommand\bibinfo[2]{#2}
\providecommand\natexlab[1]{#1}
\providecommand\showeprint[2][]{arXiv:#2}

\bibitem[\protect\citeauthoryear{Batagelj and Zaversnik}{Batagelj and Zaversnik}{2003}]%
        {batagelj2003m}
\bibfield{author}{\bibinfo{person}{Vladimir Batagelj} {and} \bibinfo{person}{Matjaz Zaversnik}.} \bibinfo{year}{2003}\natexlab{}.
\newblock \showarticletitle{An o (m) algorithm for cores decomposition of networks}.
\newblock \bibinfo{journal}{\emph{arXiv preprint cs/0310049}} (\bibinfo{year}{2003}).
\newblock


\bibitem[\protect\citeauthoryear{Batagelj and Zaver{\v{s}}nik}{Batagelj and Zaver{\v{s}}nik}{2011}]%
        {batagelj2011fast}
\bibfield{author}{\bibinfo{person}{Vladimir Batagelj} {and} \bibinfo{person}{Matja{\v{z}} Zaver{\v{s}}nik}.} \bibinfo{year}{2011}\natexlab{}.
\newblock \showarticletitle{Fast algorithms for determining (generalized) core groups in social networks}.
\newblock \bibinfo{journal}{\emph{Advances in Data Analysis and Classification}} \bibinfo{volume}{5}, \bibinfo{number}{2} (\bibinfo{year}{2011}), \bibinfo{pages}{129--145}.
\newblock


\bibitem[\protect\citeauthoryear{Bonchi, Khan, and Severini}{Bonchi et~al\mbox{.}}{2019}]%
        {bonchi2019distance}
\bibfield{author}{\bibinfo{person}{Francesco Bonchi}, \bibinfo{person}{Arijit Khan}, {and} \bibinfo{person}{Lorenzo Severini}.} \bibinfo{year}{2019}\natexlab{}.
\newblock \showarticletitle{Distance-generalized core decomposition}. In \bibinfo{booktitle}{\emph{proceedings of the 2019 international conference on management of data}}. \bibinfo{pages}{1006--1023}.
\newblock


\bibitem[\protect\citeauthoryear{Chen, Wu, Sun, and Wang}{Chen et~al\mbox{.}}{2021a}]%
        {chen2021maximum}
\bibfield{author}{\bibinfo{person}{Chen Chen}, \bibinfo{person}{Yanping Wu}, \bibinfo{person}{Renjie Sun}, {and} \bibinfo{person}{Xiaoyang Wang}.} \bibinfo{year}{2021}\natexlab{a}.
\newblock \showarticletitle{Maximum Signed $\theta$-Clique Identification in Large Signed Graphs}.
\newblock \bibinfo{journal}{\emph{IEEE Transactions on Knowledge and Data Engineering}} \bibinfo{volume}{35}, \bibinfo{number}{2} (\bibinfo{year}{2021}), \bibinfo{pages}{1791--1802}.
\newblock


\bibitem[\protect\citeauthoryear{Chen, Zhu, Sun, Wang, and Wu}{Chen et~al\mbox{.}}{2021c}]%
        {chen2021edge}
\bibfield{author}{\bibinfo{person}{Chen Chen}, \bibinfo{person}{Qiuyu Zhu}, \bibinfo{person}{Renjie Sun}, \bibinfo{person}{Xiaoyang Wang}, {and} \bibinfo{person}{Yanping Wu}.} \bibinfo{year}{2021}\natexlab{c}.
\newblock \showarticletitle{Edge manipulation approaches for k-core minimization: Metrics and analytics}.
\newblock \bibinfo{journal}{\emph{IEEE Transactions on Knowledge and Data Engineering}} \bibinfo{volume}{35}, \bibinfo{number}{1} (\bibinfo{year}{2021}), \bibinfo{pages}{390--403}.
\newblock


\bibitem[\protect\citeauthoryear{Chen, Chou, and Chen}{Chen et~al\mbox{.}}{2014}]%
        {chen2014distributed}
\bibfield{author}{\bibinfo{person}{Pei-Ling Chen}, \bibinfo{person}{Chung-Kuang Chou}, {and} \bibinfo{person}{Ming-Syan Chen}.} \bibinfo{year}{2014}\natexlab{}.
\newblock \showarticletitle{Distributed algorithms for k-truss decomposition}. In \bibinfo{booktitle}{\emph{2014 IEEE International Conference on Big Data (Big Data)}}. IEEE, \bibinfo{pages}{471--480}.
\newblock


\bibitem[\protect\citeauthoryear{Chen, Yuan, Han, and Qian}{Chen et~al\mbox{.}}{2021b}]%
        {chen2021higher}
\bibfield{author}{\bibinfo{person}{Zi Chen}, \bibinfo{person}{Long Yuan}, \bibinfo{person}{Li Han}, {and} \bibinfo{person}{Zhengping Qian}.} \bibinfo{year}{2021}\natexlab{b}.
\newblock \showarticletitle{Higher-order truss decomposition in graphs}.
\newblock \bibinfo{journal}{\emph{IEEE Transactions on Knowledge and Data Engineering}} \bibinfo{volume}{35}, \bibinfo{number}{4} (\bibinfo{year}{2021}), \bibinfo{pages}{3966--3978}.
\newblock


\bibitem[\protect\citeauthoryear{Cheng, Chen, Wang, and Xiang}{Cheng et~al\mbox{.}}{2021}]%
        {cheng2021efficient}
\bibfield{author}{\bibinfo{person}{Dawei Cheng}, \bibinfo{person}{Chen Chen}, \bibinfo{person}{Xiaoyang Wang}, {and} \bibinfo{person}{Sheng Xiang}.} \bibinfo{year}{2021}\natexlab{}.
\newblock \showarticletitle{Efficient top-k vulnerable nodes detection in uncertain graphs}.
\newblock \bibinfo{journal}{\emph{IEEE Transactions on Knowledge and Data Engineering}} \bibinfo{volume}{35}, \bibinfo{number}{2} (\bibinfo{year}{2021}), \bibinfo{pages}{1460--1472}.
\newblock


\bibitem[\protect\citeauthoryear{Cohen}{Cohen}{2008}]%
        {cohen2008trusses}
\bibfield{author}{\bibinfo{person}{Jonathan Cohen}.} \bibinfo{year}{2008}\natexlab{}.
\newblock \showarticletitle{Trusses: Cohesive subgraphs for social network analysis}.
\newblock \bibinfo{journal}{\emph{National security agency technical report}} \bibinfo{volume}{16}, \bibinfo{number}{3.1} (\bibinfo{year}{2008}), \bibinfo{pages}{1--29}.
\newblock


\bibitem[\protect\citeauthoryear{Hirsch}{Hirsch}{2005}]%
        {hirsch2005index}
\bibfield{author}{\bibinfo{person}{Jorge~E Hirsch}.} \bibinfo{year}{2005}\natexlab{}.
\newblock \showarticletitle{An index to quantify an individual's scientific research output}.
\newblock \bibinfo{journal}{\emph{Proceedings of the National academy of Sciences}} \bibinfo{volume}{102}, \bibinfo{number}{46} (\bibinfo{year}{2005}), \bibinfo{pages}{16569--16572}.
\newblock


\bibitem[\protect\citeauthoryear{Huang, Lu, and Lakshmanan}{Huang et~al\mbox{.}}{2016}]%
        {huang2016truss}
\bibfield{author}{\bibinfo{person}{Xin Huang}, \bibinfo{person}{Wei Lu}, {and} \bibinfo{person}{Laks~VS Lakshmanan}.} \bibinfo{year}{2016}\natexlab{}.
\newblock \showarticletitle{Truss decomposition of probabilistic graphs: Semantics and algorithms}. In \bibinfo{booktitle}{\emph{Proceedings of the 2016 International Conference on Management of Data}}. \bibinfo{pages}{77--90}.
\newblock


\bibitem[\protect\citeauthoryear{L{\"u}, Zhou, Zhang, and Stanley}{L{\"u} et~al\mbox{.}}{2016}]%
        {lu2016h}
\bibfield{author}{\bibinfo{person}{Linyuan L{\"u}}, \bibinfo{person}{Tao Zhou}, \bibinfo{person}{Qian-Ming Zhang}, {and} \bibinfo{person}{H~Eugene Stanley}.} \bibinfo{year}{2016}\natexlab{}.
\newblock \showarticletitle{The H-index of a network node and its relation to degree and coreness}.
\newblock \bibinfo{journal}{\emph{Nature communications}} \bibinfo{volume}{7}, \bibinfo{number}{1} (\bibinfo{year}{2016}), \bibinfo{pages}{10168}.
\newblock


\bibitem[\protect\citeauthoryear{Sariyuce, Seshadhri, and Pinar}{Sariyuce et~al\mbox{.}}{2017}]%
        {sariyuce2017local}
\bibfield{author}{\bibinfo{person}{Ahmet~Erdem Sariyuce}, \bibinfo{person}{C Seshadhri}, {and} \bibinfo{person}{Ali Pinar}.} \bibinfo{year}{2017}\natexlab{}.
\newblock \showarticletitle{Local algorithms for hierarchical dense subgraph discovery}.
\newblock \bibinfo{journal}{\emph{arXiv preprint arXiv:1704.00386}} (\bibinfo{year}{2017}).
\newblock


\bibitem[\protect\citeauthoryear{Sun, Chen, Wang, Zhang, and Wang}{Sun et~al\mbox{.}}{2020}]%
        {sun2020stable}
\bibfield{author}{\bibinfo{person}{Renjie Sun}, \bibinfo{person}{Chen Chen}, \bibinfo{person}{Xiaoyang Wang}, \bibinfo{person}{Ying Zhang}, {and} \bibinfo{person}{Xun Wang}.} \bibinfo{year}{2020}\natexlab{}.
\newblock \showarticletitle{Stable community detection in signed social networks}.
\newblock \bibinfo{journal}{\emph{IEEE Transactions on Knowledge and Data Engineering}} \bibinfo{volume}{34}, \bibinfo{number}{10} (\bibinfo{year}{2020}), \bibinfo{pages}{5051--5055}.
\newblock


\bibitem[\protect\citeauthoryear{Sun, Wu, Wang, Chen, Zhang, and Lin}{Sun et~al\mbox{.}}{2023}]%
        {sun2023clique}
\bibfield{author}{\bibinfo{person}{Renjie Sun}, \bibinfo{person}{Yanping Wu}, \bibinfo{person}{Xiaoyang Wang}, \bibinfo{person}{Chen Chen}, \bibinfo{person}{Wenjie Zhang}, {and} \bibinfo{person}{Xuemin Lin}.} \bibinfo{year}{2023}\natexlab{}.
\newblock \showarticletitle{Clique identification in signed graphs: a balance theory based model}.
\newblock \bibinfo{journal}{\emph{IEEE Transactions on Knowledge and Data Engineering}} \bibinfo{volume}{35}, \bibinfo{number}{12} (\bibinfo{year}{2023}), \bibinfo{pages}{12513--12527}.
\newblock


\bibitem[\protect\citeauthoryear{Tan, Qian, Chen, Qing, Wu, Wang, and Zhang}{Tan et~al\mbox{.}}{2023}]%
        {tan2023higher}
\bibfield{author}{\bibinfo{person}{Xingyu Tan}, \bibinfo{person}{Jingya Qian}, \bibinfo{person}{Chen Chen}, \bibinfo{person}{Sima Qing}, \bibinfo{person}{Yanping Wu}, \bibinfo{person}{Xiaoyang Wang}, {and} \bibinfo{person}{Wenjie Zhang}.} \bibinfo{year}{2023}\natexlab{}.
\newblock \showarticletitle{Higher-Order Peak Decomposition}. In \bibinfo{booktitle}{\emph{Proceedings of the 32nd ACM International Conference on Information and Knowledge Management}}. \bibinfo{pages}{4310--4314}.
\newblock


\bibitem[\protect\citeauthoryear{Wang, Zhang, Zhang, Lin, and Chen}{Wang et~al\mbox{.}}{2016}]%
        {wang2016bring}
\bibfield{author}{\bibinfo{person}{Xiaoyang Wang}, \bibinfo{person}{Ying Zhang}, \bibinfo{person}{Wenjie Zhang}, \bibinfo{person}{Xuemin Lin}, {and} \bibinfo{person}{Chen Chen}.} \bibinfo{year}{2016}\natexlab{}.
\newblock \showarticletitle{Bring order into the samples: A novel scalable method for influence maximization}.
\newblock \bibinfo{journal}{\emph{IEEE Transactions on Knowledge and Data Engineering}} \bibinfo{volume}{29}, \bibinfo{number}{2} (\bibinfo{year}{2016}), \bibinfo{pages}{243--256}.
\newblock


\bibitem[\protect\citeauthoryear{Wu, Sun, Chen, Wang, and Zhu}{Wu et~al\mbox{.}}{2020}]%
        {wu2020maximum}
\bibfield{author}{\bibinfo{person}{Yanping Wu}, \bibinfo{person}{Renjie Sun}, \bibinfo{person}{Chen Chen}, \bibinfo{person}{Xiaoyang Wang}, {and} \bibinfo{person}{Qiuyu Zhu}.} \bibinfo{year}{2020}\natexlab{}.
\newblock \showarticletitle{Maximum signed (k, r)-truss identification in signed networks}. In \bibinfo{booktitle}{\emph{Proceedings of the 29th ACM International Conference on Information \& Knowledge Management}}. \bibinfo{pages}{3337--3340}.
\newblock


\bibitem[\protect\citeauthoryear{Wu, Sun, Wang, Zhang, Qin, Zhang, and Lin}{Wu et~al\mbox{.}}{2024}]%
        {wu2024efficient}
\bibfield{author}{\bibinfo{person}{Yanping Wu}, \bibinfo{person}{Renjie Sun}, \bibinfo{person}{Xiaoyang Wang}, \bibinfo{person}{Ying Zhang}, \bibinfo{person}{Lu Qin}, \bibinfo{person}{Wenjie Zhang}, {and} \bibinfo{person}{Xuemin Lin}.} \bibinfo{year}{2024}\natexlab{}.
\newblock \showarticletitle{Efficient Maximal Temporal Plex Enumeration}. In \bibinfo{booktitle}{\emph{2024 IEEE 40th International Conference on Data Engineering (ICDE)}}. IEEE, \bibinfo{pages}{3098--3110}.
\newblock


\end{thebibliography}

\end{document}